\newcommand{\bX}{\bm{X}}
\newcommand{\bY}{\bm{Y}}
\newcommand{\Data}{\mathcal D}
\newcommand{\diag}{\operatorname{diag}}
\newcommand{\Dirichlet}{\operatorname{Dirichlet}}
\newcommand{\E}{\mathbb E}
\newcommand{\Gam}{\operatorname{Gam}}
\newcommand{\iid}{\stackrel{\text{iid}}{\sim}}
\newcommand{\Leaves}{\mathcal L}
\newcommand{\Normal}{\operatorname{Normal}}
\newcommand{\Pit}{\widetilde{\Pi}}
\newcommand{\pit}{\widetilde{\pi}}
\newcommand{\Poisson}{\operatorname{Poisson}}
\newcommand{\Reals}{\mathbb{R}}
\newcommand{\sM}{\mathcal M}
\newcommand{\sS}{\mathcal S}
\newcommand{\Tree}{\mathcal T}
\newcommand{\Treef}{\operatorname{Tree}}
\newcommand{\Uniform}{\operatorname{Uniform}}
\newcommand{\Var}{\operatorname{Var}}
\theoremstyle{theorem}
\newtheorem{theorem}{Theorem}
\newtheorem{lemma}{Lemma}
\theoremstyle{remark}
\newtheorem{remark}{Remark}
\begin{document}

\title{Bayesian Nonparametric Quasi Likelihood}

\author{Antonio R. Linero\thanks{Department of Statistics and Data Sciences,
    University of Texas at Austin, \texttt{antonio.linero@austin.utexas.edu}}}
\date{}

\maketitle

\begin{abstract}
  A recent trend in Bayesian research has been revisiting generalizations of the
  likelihood that enable Bayesian inference without requiring the specification
  of a model for the data generating mechanism. This paper focuses on a Bayesian
  nonparametric extension of Wedderburn's quasi-likelihood, using Bayesian
  additive regression trees to model the mean function. Here, the analyst posits
  only a structural relationship between the mean and variance of the outcome.
  We show that this approach provides a unified, computationally efficient,
  framework for extending Bayesian decision tree ensembles to many new settings,
  including simplex-valued and heavily heteroskedastic data. We also introduce
  Bayesian strategies for inferring the dispersion parameter of the
  quasi-likelihood, a task which is complicated by the fact that the
  quasi-likelihood itself does not contain information about this parameter;
  despite these challenges, we are able to inject updates for the dispersion
  parameter into a Markov chain Monte Carlo inference scheme in a way that, in
  the parametric setting, leads to a Bernstein-von Mises result for the
  stationary distribution of the resulting Markov chain. We illustrate the
  utility of our approach on a variety of both synthetic and non-synthetic
  datasets.
  
  \noindent \textbf{Key words:} Bayesian additive regression trees; decision
  tree ensembles; nonparametric regression; robust statistics; Markov chain
  Monte Carlo.
\end{abstract}

\doublespacing

\section{Introduction}

We consider the problem of performing Bayesian inference when the analyst would
like to avoid specifying a model for the data generating mechanism. To perform
fully-Bayesian inference, one typically begins by specifying (i) a prior
distribution $\pi(\theta)$ for the parameters $\theta$ and (ii) a likelihood
$L(\theta \mid \Data)$ for the data $\Data$. While a common complaint is that
inferences might be sensitive to the choice of $\pi(\theta)$, a more
consequential problem is that it is rarely the case that one has a strong
justification for $L(\theta \mid \Data)$ either. To make inferences robust to
misspecification of $L(\theta \mid \Data)$, many recent works have aimed to
bypass the specification of the likelihood entirely, and instead model the
information contained in the data using, for example, moment equations
\citep{chib2018bayesian,yin2009bayesian,schennach2005bayesian} or loss functions
\citep{lyddon2019general,bissiri2016general}.

In the context of regression, one possible solution is to specify a
\emph{quasi-likelihood} $Q(\theta)$
\citep{nelder1987extended,mcculaugh1989generalized}, which only requires
specifying a mean function $\E_\theta(Y_i \mid X_i = x) = \mu_\theta(x)$ and a
mean-variance relation that relates $\Var_\theta(Y_i \mid X_i = x)$ to
$\mu_\theta(x)$. A \emph{quasi-posterior} for $\theta$ can then be derived by
replacing $L(\theta \mid \Data)$ with $Q(\theta)$ in Bayes rule, i.e.,
$\pit(\theta) \propto \pi(\theta) \, Q(\theta)$. Despite this, quasi-likelihood
methods have seen limited attention in the Bayesian literature; notable
exceptions include \citet{agnoletto2023bayesian}, \citet{lin2006quasi} and
several works of Ventura \citep{greco2008robust,ventura2010default,
  ventura2016pseudo}. This may be because most common problems solved by
quasi-likelihood --- specifically, accounting for overdispersion for count and
binomial type data --- have alternate Bayesian solutions that instead specify a
full data generating mechanism.

This manuscript revisits the quasi-likelihood, primarily for nonparametric
Bayesian inference, and argues that it offers a convenient, robust, and
computationally efficient alternative to the use of full probabilistic models
for routine data analysis problems. We make the following contributions:

\begin{enumerate}
\item We develop nonparametric models for the mean function $\mu_\theta(x)$
  using Bayesian additive regression tree (BART) models, and show that these
  approaches are computationally efficient and perform very well on both real
  and synthetic data. This effectively reduces our modeling assumptions to a
  single assumption about the mean-variance relation. Computations proceed via
  simple modifications to the Bayesian backfitting algorithms of
  \citet{chipman2010bart} and \citet{murray2021log}. As a byproduct, we obtain
  extensions of BART to both simplex-valued outcome data and structured
  variance models that were previously unavailable.
\item A limitation of using $Q(\theta)$ for inference is that it cannot be used
  directly for inferences about the \emph{dispersion parameter} $\phi$ of the
  model \citep{davidian1988note,nelder1987extended}, a parameter that is
  analogous to the error variance in linear regression. We address this problem
  for Bayesian methods by introducing three different schemes for updating
  $\phi$ during Markov chain Monte Carlo (MCMC) based inference, and further
  extend our approaches to allow for estimation of the mean-variance relation as
  well. 
\item A theoretical issue is that the best-performing methods for updating
  $\phi$ during MCMC correspond to updates from ``full conditionals'' that are
  not compatible with the quasi-likelihood \citep[see][for a discussion of
  compatibility]{gelman2018characterizing}. We therefore give theoretical
  justification to our approach in the parametric setting by establishing a
  Bernstein-von Mises result for our quasi-posterior, with the quasi-posterior
  of the regression parameters being asympototically normal with a
  correctly-calibrated variance.
\end{enumerate}

We illustrate the use of our Bayesian nonparametric quasi-likelihood methods on
synthetic data and data from the Medical Expenditure Panel Survey (MEPS), which
is an ongoing panel survey of the use of the healthcare system in the United
States. On synthetic data, we find that BART-based quasi-likelihood methods
perform extremely well and are no more computationally intensive to use than the
original methods of \citet{chipman2010bart}. On the MEPS dataset, we confirm
existing results in the literature that the variance in medical expenditures
scales roughly as $V(\mu) = \mu^\kappa$ for some $\kappa \in (1,2)$, and further
use our techniques to understand which variables are most material to
forecasting an individual's medical expenditures.

\subsection{Notation and Literature Review}

We consider data $\Data = (\bY, \bX)$ where $\bY = (Y_1, \ldots, Y_N)^\top$
denotes the outcome of interest and $\bX \in \Reals^{N \times P} = (X_1, \ldots,
X_N)^\top$ is a design matrix of covariates, with each $X_i \in \Reals^P$.
Associated to each outcome $Y_i$ is a positive weight $\omega_i$; this allows
for us to accommodate binomial-type data with differing counts per observation,
heteroskedastic data with known weights, and survey sampling weights. We define
$\E_\theta(Y_i \mid X_i = x) = \mu_\theta(x)$ and assume that the mean and
variance are related through the \emph{mean-variance relation} $\Var_\theta(Y_i
\mid X_i, \omega) = \phi\,V_\eta\{\mu_\theta(X_i)\} / \omega_i$; for example,
for Poisson data we have $V_\eta(\mu) = \mu$ while for gamma distributed data we
have $V_\eta(\mu) = \mu^2$. Here, $\theta$ denotes a collection of all of the
model parameters; it consists of a regression function $r(\cdot)$ that
determines $\mu_\theta(\cdot)$, a dispersion parameter $\phi > 0$ and, possibly,
a parameter $\eta$ for the \emph{variance function} $V_\eta$.

Quasi-likelihood (QL) methods were introduced by \citet{weddernburn1974quasi}
shortly after the introduction of generalized linear models (GLMs) by
\citet{nelder1972generalized}. Rather than defining an exponential dispersion
family as done with GLMs, we use the specified mean and variance functions
to define a \emph{quasi-likelihood function}
\begin{align*}
  q_\theta(y \mid x, \omega) =
  \exp\left\{ -\frac{\omega D\{y, \mu_\theta(x)\}}{2\phi} \right\}
  \qquad \text{and} \qquad
  D(y, \mu) = 2 \int_\mu ^ y \frac{y - t}{V_\eta(t)} \ dt,
\end{align*}
where $D(y, \mu)$ is referred to as the \emph{quasi-deviance}. In the special
case where $[Y_i \mid X_i, \omega_i, \theta]$ is modeled with an exponential
dispersion family, it can be shown that the quasi-likelihood coincides with the
associated density up-to a term that does not depend on $\mu_\theta(x)$. This
definition is motivated by the fact that, when the mean and variance functions
are correctly specified, the quasi-likelihood mimics the usual score and
information equations
\begin{align}
  \label{eq:moments}
  \begin{split}
    \E_{\theta}\left\{\frac{\partial}{\partial \mu}
    \log q_\theta(Y_i \mid X_i, \omega_i) \mid X_i\right\} &= 0 
                                                           \qquad \text{and} \qquad \\
    \Var_{\theta}\left\{\frac{\partial}{\partial \mu}
    \log q_\theta(Y_i \mid X_i, \omega_i) \mid X_i\right\}
    &= - \E_\theta\left\{ \frac{\partial^2}{\partial \mu^2} 
       \log q_\theta(Y_i \mid X_i, \omega_i) \mid X_i \right\}.
  \end{split}
\end{align}
Because these equations are effectively the same as those used to justify
likelihood-based inference for GLMs, it is unsurprising that the
quasi-likelihood $Q(\theta) = \prod_i q_\theta(Y_i \mid X_i, \omega)$ behaves in
many ways like a genuine likelihood function even when $q_\theta(y \mid x,
\omega)$ is not a valid density; for example, the quasi-likelihood can be used
to perform likelihood ratio tests and construct confidence intervals as though
it corresponded to a genuine likelihood \citep[][Chapter
9]{mcculaugh1989generalized}.

Despite the appeal of quasi-likelihood methods, Bayesian variants are
surprisingly underdeveloped. One possible strategy for Bayesian inference is to
define a \emph{quasi-posterior} based on data as
$\pi(\theta \mid \Data) = Q(\theta) \, \pi(\theta) / \int Q(\theta) \,
\pi(\theta) \ d\theta$. In addition to having an interpretation in terms of
quasi-likelihood, following \citet{bissiri2016general} the quasi-posterior can
be justified as giving a coherent update of $\theta$ based on the prior and the
loss function $D(y, \mu)$. When $\phi$ and $\eta$ are fixed by the prior, this
approach works quite well; for example, \citet{agnoletto2023bayesian} prove a
Bernstein-von Mises result when the mean and variance are correctly specified
and $\phi$ is known.

\subsection{Dispersion and Variance Function Estimation}

A complication of the use of the quasi-posterior is that the quasi-likelihood
cannot be used to infer $\phi$ \citep[][Section 9.6]{mcculaugh1989generalized}.
The reason for this is that quasi-likelihoods and genuine likelihoods only agree
up to a constant $c(y, \phi)$, and while this constant is immaterial to
estimating $\mu_\theta(x)$ it is very important for estimating $\phi$. While
there have been some attempts at specifying the form of $c(y,\phi)$ for
quasi-likelihood methods to allow estimation of $\phi$
\citep{nelder1987extended}, these solutions may not lead to consistent
estimation of $\phi$ \citep{davidian1988note}.

The most common solution, and the default in most software, is to replace $\phi$
with a moment-based estimator
\begin{align}
  \label{eqn:moment}
  \widehat\phi = \frac{1}{N - P} \sum_{i = 1}^N
  \frac{\omega_i (Y_i - \widehat \mu_i)^2}
  {V(\widehat \mu_i)},
\end{align}
where the $\widehat \mu_i$'s are the maximum likelihood estimators of the
$\mu_\theta(X_i)$'s and $P$ is the number of parameters in the
$\mu_\theta(\cdot)$ model; importantly, the $\widehat \mu_i$'s do not depend on
$\phi$. This strategy is effective in parametric non-Bayesian settings, but it
is not clear how to apply it in Bayesian settings where it is unclear what to
replace $N - P$ with and estimation of the $\mu_\theta(X_i)$'s is no longer
decoupled with estimation of $\phi$.

To aaddress this, we introduce three methods --- extended quasi-posteriors
(EQPs), pseudo-likelihood posteriors (PLPs), and Bayesian bootstrapped
quasi-likelihood (BBQ) --- for estimating $(\phi, \eta)$ during MCMC-based
inference. As there are subtle issues with the compatibility of the PLP and BBQ
methods with the quasi-posterior, we also give some theoretical justification
for these methods.

\subsection{Outline}

In Section~\ref{sec:bayesian-quasi-likelihood}, we describe further our Bayesian
nonparametric quasi-likelihood approach, discuss its extension to multivariate
quasi-likelihood (MQL), and review the use of Bayesian additive regression trees
(BART) to flexibly model $\mu_\theta(x)$. In
Section~\ref{sec:inference-for-the-dispersion-parameter} we introduce our
approaches to estimating the dispersion parameter and variance function, and in
Section~\ref{sec:theoretical-properties} we give Frequentist justification to
these approaches. In Section~\ref{sec:illustrations} we provide illustrations of
the use of Bayesian nonparametric quasi-likelihood on both the MEPS dataset and
synthetic data. We close in Section~\ref{sec:discussion} with a discussion.

\section{Bayesian Nonparametric Quasi-Likelihood}
\label{sec:bayesian-quasi-likelihood}

Let $\theta = (r, \phi, \eta)$ denote the parameter vector under the moment
assumptions $g\{\mu_\theta(x)\} = r(x)$ and $\Var_\theta(Y_i \mid X_i, \omega_i)
= \phi \, V_\eta\{\mu_\theta(X_i)\}$. The monotonically increasing function
$g(\cdot)$ is referred to as the \emph{link function}
\citep{mcculaugh1989generalized}, and it is assumed to be known. Temporarily, we
will also assume that $\phi$ and $\eta$ are known.

Our point of departure is to use the quasi-likelihood to construct a
\emph{quasi-posterior} distribution for $r$:
\begin{align*}
  \Pit(dr \mid \phi, \eta, \Data)
  =
  \frac{Q(\theta) \, \Pi(dr)}{\int Q(\theta) \, \Pi(r) \ dr}
  \qquad \text{where} \qquad
  Q(\theta) = \exp\left\{ 
    -\frac{1}{2\phi} \sum_i \omega_i D\{Y_i, \mu_\theta(X_i)\} 
  \right\}.
\end{align*}
To remove restrictive assumptions on $\mu_\theta(x)$, we specify a nonparametric
prior on $r$; we focus here on Bayesian additive regression trees (BART), but in
principle there is no obstruction to modeling $r(x)$ with, for example, a
Gaussian process \citep{rasmussen2005gaussian}.

\begin{table}
  \centering
\begin{tabular}{lrrr}
  \toprule 
  Model                   & $\omega$ & $\mu$                                 & $V(\mu)$                \\
  \midrule
  Quasi-binomial          & $n_{i}$  & $(1+e^{r})^{-1}$                      & $\mu(1-\mu)$            \\
  Quasi-Poisson           & 1        & $e^{r}$                               & $\mu$                   \\
  Quasi-gamma             & 1        & $e^{-r} \text{ or } e^{r}$                              & $\mu^{2}$               \\
  Quasi-power             & 1        & $e^{r}$                               & $\mu^{k}$               \\
  Quasi-multinomial       & $n_{i}$  & $\frac{e^{r_{j}}}{\sum_{k}e^{r_{k}}}$ & $D_{\mu}-\mu\mu^{\top}$ \\
  \bottomrule
\end{tabular}
\caption{Quasi-likelihood and models for which we can provide simple MCMC
  algorithms when $r(x)$ is modeled with BART. Model gives the quasi-likelihood
  model name, $\omega$ denotes the form of the weight, $\mu$ denotes the formula
  for $\mu_\theta(X_i)$ in terms of $r(X_i)$, and $V(\mu)$ denotes the variance
  function. For the quasi-multinomial, $D_\mu = \diag(\mu_1, \ldots, \mu_K)$
  where $K$ is the dimension of $Y_i$. \label{tab:models}}
\end{table}

In Table~\ref{tab:models} we provide a list of models that we will consider,
with the different choice of weights $\omega_i$, choice of $g^{-1}$ to relate
$\mu_\theta(x)$ to $r(x)$, and variance function. Before proceeding, we briefly
describe the common use cases for each of these models. The quasi-binomial and
quasi-Poisson models are generally used to address overdispersion \citep[see,
for example][Chapter 4.5 and 5.5]{mcculaugh1989generalized} for binomial-type
and count-type data respectively, as the binomial and Poisson distributions
directly imply that $\phi = 1$; additionally, the beta distribution satisfies
the quasi-binomial variance relation, and so the quasi-binomial model can also
be used to perform beta regression. The quasi-gamma model is in fact formally
equivalent to the usual gamma regression model, but we continue with the term
``quasi-gamma'' to make clear first that we are not assuming that the underlying
data is gamma distributed and second that the dispersion parameter $\phi$ is not
estimated using the gamma likelihood; beyond the gamma distribution, the
quasi-gamma model naturally addresses scale regression models of the form $Y_i =
e^{r(X_i)} \, \epsilon_i$ with $\E(\epsilon_i) = 1$ and $\Var(\epsilon_i) =
\phi$, and so is valid for many settings of interest where the data are not
gamma distributed. The quasi-power model generalizes both the quasi-Poisson and
quasi-gamma models to allow for an arbitrary power in the variance, and is
appropriate for the medical expenditure data we analyze here; these are
sometimes referred to as \emph{Tweedie models} \citep{smyth1999adjusted}.
Finally, the quasi-multinomial model is appropriate both for multinomial-type
data to handle overdispersed outcomes, but is also appropriate for
simplex-valued outcome data such as Dirichlet-outcome data, where outcomes will
tend to be \emph{underdispersed} relative to the multinomial distribution.

\subsection{Review Bayesian Additive Regression Tree}

Conveniently, when the \emph{Bayesian additive regression tress} (BART)
framework is used to model $r(x)$, all of the quasi-likelihood models given in
Table~\ref{tab:models} are amenable to simple Gibbs sampling algorithms using
``Bayesian backfitting'' \citep{hastie2000bayesian,chipman2010bart}. The BART we
framework we use sets
\begin{align*}
  \mu_\theta(x) = g^{-1}\{r(x)\} \qquad \text{where} \qquad
  r(x) = \sum_{t = 1}^T \Treef(x; \Tree_t, \sM_t).
\end{align*}
The function $\Treef(x; \Tree, \sM)$ denotes a \emph{regression tree} with
decision tree $\Tree$ and leaf node parameters $\sM$, which returns the value
$\lambda_{\ell}$ if $x$ is associated to leaf $\ell$ of tree $\Tree$. A schematic of
such a regression tree is given in Figure~\ref{fig:tree}, with $x = (x_1,
x_2)^\top \in [0,1]^2$. It is apparent from this schematic that a regression
tree on $[0,1]^P$ is associated with a piecewise constant function, with $r(x)$
corresponding to a basis function expansion of many such decision shallow trees.

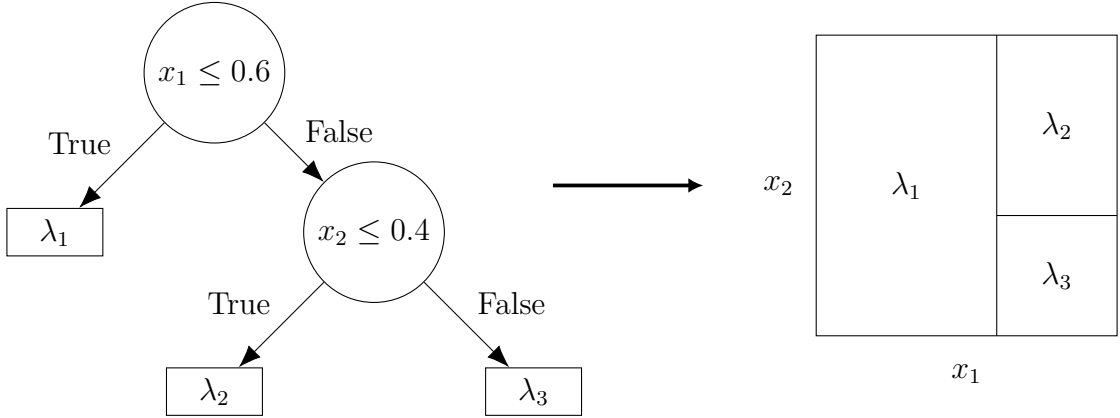
\begin{figure}
  \centering
  \begin{tikzpicture}[node distance=3cm, 
    decision/.style={circle, draw, text width=1.5cm, align=center},
    outcome/.style={rectangle, draw, text width=1cm, align=center}]

    \tikzset{
      big arrow/.style={
        draw,
        line width=0.5mm, 
        -{Latex[length=2mm,width=2mm]}, 
      }
    }

    \node[decision] (d1) {\(x_1 \leq 0.6\)};
    \node[decision, below right of=d1] (d2) {\(x_2 \leq 0.4\)};

    \node[outcome, below left of=d1] (o1) {\(\lambda_1\)};
    \node[outcome, below left of=d2] (o2) {\(\lambda_2\)};
    \node[outcome, below right of=d2] (o3) {\(\lambda_3\)};

    \draw[-{Latex[length=3mm]}] (d1) -- node[above left] {True} (o1);
    \draw[-{Latex[length=3mm]}] (d1) -- node[above right] {False} (d2);
    \draw[-{Latex[length=3mm]}] (d2) -- node[above left] {True} (o2);
    \draw[-{Latex[length=3mm]}] (d2) -- node[above right] {False} (o3);

    \draw[big arrow] (4.5,-1.5) -- (6.5,-1.5);

    \draw (8,0.5) -- (8,-3.5) -- (12,-3.5) -- (12,0.5) -- cycle;
    \draw (10.4, 0.5) -- (10.4, -3.5);
    \draw (10.4, -1.9) -- (12, -1.9);

    \node at (9.2, -1.5) {$\lambda_1$};
    \node at (11.2, -0.7) {$\lambda_2$};
    \node at (11.2, -2.7) {$\lambda_3$};
    \node at (10, -4) {$x_1$};
    \node at (8-0.5, -1.5) {$x_2$};

    
  \end{tikzpicture}
  \caption{Schematic of a regression tree decision tree (left) and the induced
    step-function on $[0,1]^2$ induced by the tree.}
  \label{fig:tree}
\end{figure}

BART is similar to other decision tree ensembling approaches, such as gradient
boosting \citep{friedman2001greedy}, with the main differences being (i) that we
place prior distributions on the $(\Tree_t, \sM_t)$'s and (ii) that inference
proceeds via Markov chain Monte Carlo rather than in a greedy fashion. We favor
the use of BART as a default prior for $r(\cdot)$, as it is highly flexible,
performs very well in practice, and is admits ``default priors'' that allow the
analyst to automatically choose the model hyperparameters; see, for example,
\citet{chipman2010bart} or \citet{linero2018bayesian} for the performance of
BART on benchmark datasets relative to random forests, gradient boosting, the
lasso. BART also possesses a number of desirable theoretical properties, such as
obtaining near-minimax rates of posterior concentration within the space of
functions that consist primarily of main effects and low-order interactions
\citep{rockova2020posterior,linero2018bayesian}. See \citet{hill2020bayesian}
and \citet{linero2017review} for comprehensive reviews on BART and Bayesian
decision tree-based methods.

\subsection{Prior Specification and Bayesian Backfitting Algorithms}
\label{sec:prior-specification}

Assuming that $V_\eta(\cdot)$ and $\phi$ are fixed a-priori, our approach only
requires specifying a prior for the $(\Tree_t, \sM_t)$'s. We use the default
branching process prior on the $\Tree_t$'s described by \citet{chipman2010bart},
which encourages most decision trees to be of depth one or two; we defer the
interested reader to that work for details. These priors for $\Tree_t$ are
empirically robust choices, and are taken by default by most BART software
packages \citep{kapelner2016bart,dorie2023dbarts,sparapani2021nonparametric}. In
our illustrations and software we also use the Dirichlet splitting rule prior
introduced by \citet{linero2018abayesian} to allow the model to automatically
remove irrelevant predictors.

The only remaining detail for prior specification is the choice of prior for the
$\sM_t$'s. We assume that the $\lambda_{t\ell}$'s are iid from some distribution
$\pi_\lambda(\lambda)$. In order to obtain a conditionally conjugate model, we
take $\pi_\lambda(\lambda)$ to be a $\log \Gam(a,b)$ distribution for the
quasi-binomial, quasi-Poisson, quasi-gamma, and quasi-multinomial models, with
$a$ and $b$ chosen so that $\E(\lambda) = 0$ and $\Var(\lambda) =
\sigma^2_\lambda$ for some user-specified value of $\sigma_\lambda$. We then
follow \citet{chipman2010bart} and set $\sigma_\lambda = 3 / (k \sqrt T)$ with
$k = 2$ by default; see \citet{chipman2010bart} for a justification of this
choice, and in particular for the normalization in the prior by $\sqrt T$.
Because it is unclear that the performance of the defaults extend to the new
models we propose, we also recommend using a half-Cauchy distribution with scale
$3 / (k \sqrt T)$ as a prior for $\sigma_\lambda$; this allows the model to
estimate $\sigma_\lambda$, while encouraging it to be near the default of
\citet{chipman2010bart}.

The models in Table~\ref{tab:models} can be fit via Markov chain Monte Carlo
using the generalized Bayesian backfitting algorithm given by
\citet{hill2020bayesian}, with the exception of the quasi-power model. These
algorithms work by first approximately sampling $\Tree_t$ from its conditional
distribution $[\Tree_t \mid \Tree_{-t}, \sM_{-t}, \phi, \eta, \Data]$ using a
Metropolis-Hastings algorithm and second sampling the leaf node parameters
$\sM_t$ from the full conditional distribution $[\sM_t \mid \Tree_t, \Tree_{-t},
\sM_{-t}, \phi, \eta, \Data]$; here, $\Tree_{-t}$ and $\sM_{-t}$ denote the
decision trees and leaf node parameters of all of the regression trees except
tree $t$. Gibbs samplers for these models are given in the appendix.

By contrast, the quasi-power model cannot be fit using any simple extensions of
existing BART sampling algorithms. We instead use the Laplace approximation,
which can be computed analytically, to approximate the relevant marginal
likelihoods and full conditionals. After applying the Laplace approximation,
implementing the quasi-power model is no more computationally intensive than the
original BART model. Moreover, as shown by \citet{linero2022generalized}, it is
easy to use the Laplace approximation to construct a valid proposal in a
Metropolis-Hastings algorithm that holds the target posterior exactly invariant.
Details of this approximation are given in the appendix.

\subsection{Extension to Multivariate Quasi-Likelihood Models}

Defining the quasi-multinomial model in Table~\ref{tab:models} requires
extending the quasi-likelihood to the multivariate setting \citep[][Section
9.3]{mcculaugh1989generalized}. Consider the pair of moment equations
$\E_\theta(Y_i \mid X_i = x, \omega_i) = \mu_\theta(x)$ and $\Var_\theta(Y_i
\mid X_i = x, \omega_i) = \frac{\phi}{\omega_i} \, V_\eta\{\mu_\theta(x)\}$
where $Y_i$ and $\mu_\theta(x)$ are now a $K$-dimensional vectors and
$V_\eta(\mu)$ is a $K \times K$ covariance matrix. The \emph{multivariate
  quasi-likelihood} (MQL) is defined as
\begin{align*}
  q_\theta(y \mid x, \omega)
  = \exp\left\{ -\frac{\omega \, D\{y, \mu_\theta(x)\}}{2\phi} \right\}
  \qquad \text{where} \qquad
  D(y, \mu) = -2 \int_{\mu}^y (y - t)^\top V_\eta(t)^{-1} \ dt,
\end{align*}
where both $\mu$ and $y$ are $K$-dimensional vectors, $V(t)$ is a $K \times K$
covariate matrix, and the integral is a path-independent line integral.

The MQL model we consider takes $g\{\mu_\theta(X_i)\} = r(X_i)$ where $r(x) =
\sum_{t = 1}^T \Treef(x; \Tree_t, \sM_t)$ is now a sum of \emph{multivariate
  decision trees}; see, for example, \citet{linero2020semiparametric} for how
multivariate decision trees can be combined using BART to model multivariate
outcomes.

We focus on the quasi-multinomial model, which takes $V(\mu) = D_\mu - \mu
\mu^\top$ where $D_\mu = \diag(\mu_1, \ldots, \mu_K)$ is a diagonal matrix. The
quasi-likelihood for this model is
\begin{align*}
  q_\theta(y \mid x, n)
  = \frac{\exp\left\{ z^\top r(x) / \phi \right\}}
  {[\sum_k \exp\{r_k(x)\}]^{n / \phi}}
\end{align*}
where $\omega = n$ corresponds to the sample size of a multinomial count
vector $z$ and $y = z / n$ is a vector of proportions. This model reduces to the
usual multinomial likelihood in the special case when $\phi = 1$, and can be
used to also model Dirichlet response data when $n = 1$ and $\phi < 1$. Note
that, when applied to multinomial data, the differences $r_k(x) - r_j(x)$
correspond to the log-odds ratio associated to a comparison between category
level $k$ and category level $j$.

For a general MQL model, assuming the $\mu_i$'s are known a moment estimator of
$\phi$ is given by
\begin{align}
  \widehat \phi = \frac{1}{NK} \sum_i
  \omega_i (Y_i - \mu_i)^\top V(\mu_i)^{-1} (Y_i - \mu_i).
\end{align}
In the special case of the quasi-multinomial model, this moment estimator
becomes
\begin{align*}
  \widehat\phi = \frac{1}{N(K-1)} \sum_i \sum_k
    \frac{n_i (Y_{ik} - \mu_{ik})^2}{\mu_{ik}}
\end{align*}
with the modification of $K$ to $K - 1$ due to the loss of a degree of freedom
from the constraint $\sum_k Y_{ik} = 1$; it is straight-forward to verify that
$\widehat\phi$ is unbiased for $\phi$.

To obtain a Bayesian nonparametric quasi-multinomial model, we set $r = r(x)$ in
the above model where $r(x)$ is a sum of multivariate decision trees $r(x) =
\sum_t \Treef(x; \Tree_t, \sM_t)$ where $\Treef(x ; \Tree , \sM) = (\lambda_{\ell 1},
\ldots, \lambda_{\ell K})^\top$ if $x$ is associated to leaf node $\ell$ of
$\Tree$. For identifiability purposes in the quasi-multinomial model, it is
customary to set $r_1(x) \equiv 0$. However, as the individual functions
$r_k(x)$ are not of direct interest, and to ensure that the model is symmetric
across all categories, we place no restrictions on the $\lambda_{\ell k}$'s. The
$\lambda_{\ell k}$'s are given the same log-gamma prior described in
Section~\ref{sec:prior-specification}, with the default $\sigma_\lambda \sim
\text{half-Cauchy}\{0, 3 / (k \sqrt{2T})\}$ with $k = 2$ by default. The choice
$3 / k \sqrt {2T}$ is chosen to represent a prior belief that, for each $x$ the
log-odds ratio $r_k(x) - r_j(x)$ is approximately normally distributed with mean
$0$ and standard deviation $3/k$.

\section{Inference for the Dispersion Parameter}
\label{sec:inference-for-the-dispersion-parameter}


We now describe some strategies for estimating $\phi$ and discuss their merits.
For the sake of exposition, the updates are roughly ordered from least to most
recommended.

\begin{paragraph}{Extended Quasi-Posteriors (EQP)}
  Introduced by \citet{weddernburn1974quasi}, the \emph{extended
    quasi-likelihood} (EQL) augments $Q(\theta)$ with an additional term
  $\sqrt{2\pi \phi V_\eta(Y_i)}$ to obtain
  \begin{align*}
    Q^+(\theta) = \prod_i \frac{1}{\sqrt{2\pi \phi V_\eta(Y_i)}}
      \exp\left( -\frac{\omega_i D\{Y_i, \mu_\theta(X_i)\}}{2\phi} \right).
  \end{align*}
  The extended quasi-posterior is then given by $\Pi(dr, d\phi, d\eta \mid
  \Data) \propto Q^+(\theta) \, \Pi(dr, d\phi, d\eta)$. Given a gamma prior
  $\phi^{-1} \sim \Gam(a, b)$, the EQP is convenient in that the full
  conditional for $\phi$ is given by $\phi^{-1} \sim \Gam\{a + N / 2, b + \sum_i
  \omega_i D\{Y_i, \mu_\theta(X_i)\} / 2\}$. The EQP is not recommended because it
  generally does not produce a consistent estimator of $\phi$
  \citep{davidian1988note}. The primary advantage of EQL is that it has none of
  the compatibility issues of the PLP and BBQ approaches discussed below.
\end{paragraph}

\begin{paragraph}{Pseudo-Empirical Bayes}
  A simple strategy is to use the moment estimator \eqref{eqn:moment} obtained
  by fitting the model with (say) $\phi = 1$, and then refit the model with this
  new value of $\phi$. This might then be iterated: we refit the model with the
  new value of $\phi$, apply \eqref{eqn:moment}, and repeat this some modest
  number of times. We do not recommend this approach for Bayesian nonparametric
  models because it is not computationally efficient and we lack of obvious
  replacement for $P$ in \eqref{eqn:moment}. Instead, inspired by the
  pseudo-likelihood (PL) approach of \citet{davidian1988note}, we propose a
  pseudo-empirical Bayes strategy, which iteratively updates $(\phi, \eta)$ by
  setting
  \begin{align}
    \begin{split}
    \label{eq:EB}
    (\phi_{t+1}, \eta_{t+1}) &\gets \arg \max_{\phi, \eta} \int \text{PL}(\phi, \eta) \ 
    \ \Pit(r \mid \phi_t, \eta_t, \Data)
    \qquad \text{where} \qquad \\
    \text{PL}(\phi, \eta) &= 
    \prod_i \frac{1}{\sqrt{2\pi\phi\,V_\eta\{\mu(X_i)\}/\omega_i}}
    \exp\left[-\frac{\omega_i\{Y_i - \mu(X_i)\}^2}
    {2 \, \phi \, V_\eta\{\mu(X_i)\}} \right].
    \end{split}
  \end{align}
  To approximately evaluate the integral, one can use empirical Bayes Gibbs
  sampling \citep{casella2001empirical}. This can be iterated a modest number of
  times to obtain plausible values of $(\phi, \eta)$. This is computationally
  expensive and does not provide uncertainty quantification for $(\phi, \eta)$,
  and so still not ideal.
\end{paragraph}

\begin{paragraph}{Pseudo-Likelihood Posteriors (PLP)}
  The pseudo-likelihood posterior (PLP) approach extends the pseudo-empirical
  Bayes approach to obtain a posterior distribution for $(\phi, \eta)$, and is
  both computationally simpler and obtains (uncalibrated) uncertainty
  quantification. The main idea is to run Bayesian backfitting on $r(x)$, but
  include updates for $(\phi, \eta)$ based on the partial likelihood; that is,
  we use a two-stage Gibbs sampler:
  \begin{enumerate}
  \item Sample $r$ from $\Pit(dr \mid \phi, \eta, \Data) \propto
    Q(\theta) \, \Pi(dr)$.
  \item Sample $(\phi, \eta)$ from $\Pit(d\phi, d\eta \mid r, \Data) \propto
    \text{PL}(\phi, \eta) \, \Pi(d\phi, d\eta)$.
  \end{enumerate}
  As with the EQL approach, under the gamma prior $\phi^{-1} \sim \Gam(a,b)$,
  the PL approach results in a simple full conditional $\phi^{-1} \sim \Gam(a +
  N/2, b + \sum_i Z_i^2 / 2)$, where $Z_i = \frac{\omega_i^{1/2}\{Y_i -
    \mu_\theta(X_i)\}}{V\{\mu_\theta(X_i)\}^{1/2}}$ is the \emph{standardized
    residual}. The PL approach generally produces good point estimates of
  $(\phi, \eta)$ and uncertainty quantification for $r(\cdot)$, but may not
  produce good uncertainty quantification for $(\phi, \eta)$ and so should be
  used with caution when uncertainty in the variance function is of intrinsic
  interest. The PL approach also samples from incompatible conditionals for $r$
  and $(\phi, \eta)$ in the sense that there is no joint distribution for which
  the update distributions are the conditionals.
\end{paragraph}

\begin{paragraph}{Bayesian Bootstrap Quasi-Likelihood (BBQ)}
  A simple, robust, approach to sampling $\phi$ is to use an approach inspired
  by the loss-likelihood bootstrap \citep{lyddon2019general}. Again let $Z_i =
  \frac{\omega_i^{1/2}\{Y_i - \mu_\theta(X_i)\}}{V\{\mu_\theta(X_i)\}^{1/2}}$
  denote the standardized residual and note that $\E_\theta(Z_i^2) = \phi$. We
  model the marginal distribution of the $Z_i$'s using a Bayesian bootstrap
  $F_Z(dz) = \sum_{i = 1}^N p_i \, \delta_{Z_i}(dz)$ with the full conditional
  $p \sim \Dirichlet(1, \ldots, 1)$ and set
  \begin{align}
    \label{eqn:bbq}
    \phi = \sum_{i = 1}^N p_i \, Z_i^2.
  \end{align}
  MCMC proceeds as with the PL approach, except that $\phi$ is updated by
  sampling $p \sim \Dirichlet(1,\ldots,1)$ and using \eqref{eqn:bbq}. This
  approach can further accommodate estimation of $\eta$ by replacing
  \eqref{eqn:bbq} with the update $(\phi, \eta) = \arg \max_{\phi, \eta} \sum_i
  p_i \log \Normal\big(Y_i \mid \mu_\theta(X_i), \phi \,
  V_\eta\{\mu_\theta(X_i)\} / \omega_i\big)$. 
  The advantage of the BBQ approach over the PLP approach is that we intuitively
  expect the use of the bootstrap to provide better calibrated uncertainty
  quantifification for $(\phi, \eta)$. The downsides of the BBQ approach is that
  the conditionals of $r$ and $(\phi, \eta)$ are again incompatible and that it
  may be difficult to jointly optimize $(\phi, \eta)$ in a robust fashion.
\end{paragraph}

\section{Theoretical Properties}
\label{sec:theoretical-properties}

We now examine the theoretical properties of the PL approach
\citep[see][for results for the quasi-posterior with known
\(\phi\)]{agnoletto2023bayesian}. Proofs of these results can be found in the
appendix. For simplicity, we consider the parametric setting with $r(x) =
x^\top\beta$ and assume $\eta$ to be known. Let $g(\phi \mid \beta)$ denote
conditional density that $\phi$ is sampled from in our Gibbs sampler and let
$\pit(\beta \mid \phi) \propto Q(\beta, \phi) \, \pi(\beta)$. We restrict
attention to the ideal \emph{two-step Gibbs sampler}:
\begin{enumerate}
\item[i.] Sample $\beta \sim \pit(\beta \mid \phi)$.
\item[ii.] Sample $\phi \sim g(\phi \mid \beta)$ where $g$ is the PL update
  density for $\phi$.
\end{enumerate}
We show that the sequence $\beta^{(1)}, \beta^{(2)}, \cdots$ is a Markov chain
with a stationary distribution that is expressible as a mixture $\pit(\beta) =
\int \pit(\beta \mid \phi) \, m(\phi) \ d\phi$. The mixing distribution
$m(\phi)$ is also the marginal distribution of $\phi$ in the stationary
distribution, and is well-behaved in the sense that it concentrates near
$\phi_0$, which effectively causes $\pit(\beta)$ to be well-approximated by
a $\pit(\beta \mid \phi_0)$ density, leading to a Bernstein-von Mises result for
$\pit(\beta)$ due to existing results of \citet{agnoletto2023bayesian} and
\citet{miller2021asymptotic}.

A minor inconvenience with the BBQ and PL approaches is that there is typically
no joint distribution $\pit(\beta, \phi)$ whose conditional distributions are
$\pit(\beta \mid \phi)$ and $g(\phi \mid \beta)$; a necessary condition for this
to occur is that factorization $\pit(\beta \mid \phi) / g(\phi \mid \beta) =
u(\beta) \, v(\phi)$ holds for some functions $u(\cdot), v(\cdot)$
\citep[][Theorem 4.1]{arnold1989compatible}, which is not the case for any of
the models we have discussed. The following result states that, despite these
challenges, the chain has a well-defined stationary distribution.

\begin{theorem}
  \label{thm:stationary}
  Suppose that Condition A and Condition P in
  Appendix~\ref{sec:regularity-conditions} hold. Then the two-step Gibbs sampler
  initialized at any point $\phi > 0$ has a unique stationary distribution with
  density $\pit(\beta, \phi) = \pit(\beta) \, g(\phi \mid \beta)$ where
  $\pit(\beta) = \int \pit(\beta \mid \phi) \, m(\phi) \ d\phi$ for some mixing
  distribution $m(\phi)$. Additionally, $m(\phi) = \int \pit(\beta) \, g(\phi
  \mid \beta) \, d\beta$.
\end{theorem}

With the issue of the existence of the stationary distribution dispensed with,
we now present our Bernstein-von Mises result.

\begin{theorem}
  \label{thm:bvm}
  Under Condition P, Condition A, Condition F, Conditions D1 -- D4, and
  Condition G listed in the Appendix hold, we have
  \begin{align*}
    \int |\pit(\beta) - N\{\beta \mid \widehat \beta, \phi_0 N^{-1} H(\beta_0)^{-1}\}|
    \ d\beta \to 0 \qquad \text{almost surely},
  \end{align*}
  where $H(\beta_0)$ is given in Condition D and $\phi_0 N^{-1}
  H(\beta_0)^{-1}$is the asymptotic covariance matrix of the maximum
  quasi-likelihood estimator $\widehat \beta = \arg \min_\beta \sum_i \omega_i
  D\{Y_i ; g^{-1}(X_i^\top \beta)\}$. Additionally, $m(\phi)$ converges in
  distribution to a point-mass at $\phi_0$ almost surely.
\end{theorem}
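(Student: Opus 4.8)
The plan is to exploit the structure recorded in Theorem~\ref{thm:stationary}: the stationary law factors as $\pit(\beta,\phi) = \pit(\beta)\,g(\phi\mid\beta)$ with $\pit(\beta) = \int \pit(\beta\mid\phi)\,m(\phi)\,d\phi$ and $m(\phi) = \int \pit(\beta)\,g(\phi\mid\beta)\,d\beta$. The key observation I would use is that the $\phi$-marginal of the two-step sampler is itself a Markov chain with transition kernel $K(\phi',d\phi) = \int \pit(\beta\mid\phi')\,g(d\phi\mid\beta)\,d\beta$, and $m$ is precisely its invariant distribution. I would therefore split the argument into (i) showing that $m$ collapses to a point mass at $\phi_0$ by analyzing $K$, and (ii) transferring the fixed-$\phi$ Bernstein--von Mises (BvM) expansion through the mixture.

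For (i), I would first record the fixed-$\phi$ concentration. Since $\log Q(\beta,\phi) = -\tfrac{1}{2\phi}\sum_i \omega_i D\{Y_i,\mu_\theta(X_i)\}$ has the \emph{$\phi$-independent} minimizer $\widehat\beta$, a Laplace expansion (justified under Conditions D1--D4) gives $\pit(\beta\mid\phi)\approx N\{\beta\mid\widehat\beta,\phi\,J_N^{-1}\}$, where $J_N$ is the Hessian of $\sum_i \omega_i D\{Y_i,\mu_\theta(X_i)\}$ at $\widehat\beta$ with $N^{-1}J_N \to H(\beta_0)$; this is the content of the cited results of \citet{agnoletto2023bayesian} and \citet{miller2021asymptotic}. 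The crucial point is that the location $\widehat\beta$ does not depend on $\phi$, only the scale does, so for any $\phi = o(N)$ a draw $\beta\sim\pit(\beta\mid\phi)$ concentrates at $\widehat\beta\to\beta_0$. I would then show the PL update self-corrects: because $g(\phi\mid\beta)$ is the $\Gam\{a+N/2,\,b+\tfrac{1}{2}\sum_i Z_i^2\}$ law for $\phi^{-1}$ with $Z_i^2 = \omega_i\{Y_i-\mu_\theta(X_i)\}^2/V\{\mu_\theta(X_i)\}$, its mass concentrates near $N^{-1}\sum_i Z_i^2$; evaluating along $\beta\to\beta_0$ and invoking the mean--variance assumption $\E[\{Y_i-\mu_{\beta_0}(X_i)\}^2] = \phi_0 V\{\mu_{\beta_0}(X_i)\}/\omega_i$ together with a law of large numbers (Condition F) yields $N^{-1}\sum_i Z_i^2\to\phi_0$ almost surely. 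Hence $K(\phi',\cdot)$ pushes any admissible $\phi'$ toward $\phi_0$, forcing its invariant law to satisfy $m\to\delta_{\phi_0}$ in distribution, almost surely.

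The one genuinely delicate point, which I expect to be the main obstacle, is breaking the apparent circularity between the concentration of $\pit(\beta)$ and that of $m(\phi)$: the self-correction of $\phi$ presumes $\beta$ is near $\beta_0$, but that in turn presumes $m$ does not charge diverging $\phi$, where $\pit(\beta\mid\phi)$ becomes diffuse and the correction could fail. I would resolve this with an a priori tail bound, showing via Condition~P and Condition~G that the $\phi$-chain admits a Lyapunov function forcing $m$ to charge a fixed compact set $[\,c,C\,]\subset(0,\infty)$ with probability tending to one; on this set the fixed-$\phi$ concentration of $\beta$ holds uniformly in $\phi'$, closing the loop, after which the self-correction argument upgrades tightness on $[\,c,C\,]$ to concentration at $\phi_0$.

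Finally, for (ii) I would combine $m\to\delta_{\phi_0}$ with the fixed-$\phi$ BvM through the triangle inequality
\begin{align*}
  \int \bigl|\pit(\beta) - N\{\beta\mid\widehat\beta,\phi_0 N^{-1}H(\beta_0)^{-1}\}\bigr|\,d\beta
  &\le \int\!\Bigl[\int \bigl|\pit(\beta\mid\phi) - N\{\beta\mid\widehat\beta,\phi N^{-1}H(\beta_0)^{-1}\}\bigr|\,d\beta\Bigr] m(\phi)\,d\phi \\
  &\quad + \int\!\Bigl[\int \bigl|N\{\beta\mid\widehat\beta,\phi N^{-1}H(\beta_0)^{-1}\} - N\{\beta\mid\widehat\beta,\phi_0 N^{-1}H(\beta_0)^{-1}\}\bigr|\,d\beta\Bigr] m(\phi)\,d\phi.
\end{align*}
The first inner term vanishes for each $\phi$ by the fixed-$\phi$ BvM and is controlled uniformly on $[\,c,C\,]$; the second inner term is the total-variation distance between two mean-matched Gaussians, which is scale-invariant, depends only on $\phi/\phi_0$, is bounded, and vanishes as $\phi\to\phi_0$. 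Integrating against $m$ and using $m\to\delta_{\phi_0}$ (with compact-support tightness justifying dominated convergence) sends both terms to zero almost surely, giving the claimed result; the convergence $m\to\delta_{\phi_0}$ established in (i) is the theorem's second assertion.
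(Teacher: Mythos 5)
Your high-level architecture matches the paper's: a fixed-$\phi$ Bernstein--von Mises expansion, collapse of the mixing law $m(\phi)$ to $\delta_{\phi_0}$ via the stationarity fixed point, then transfer through the mixture by the same triangle-inequality decomposition you write in step (ii). The genuine gap is at the step you pass over with ``controlled uniformly on $[c,C]$'' and ``the fixed-$\phi$ concentration of $\beta$ holds uniformly in $\phi'$.'' Because the stationary law $m = m_n$ is itself data- and $n$-dependent, every place you integrate against it --- the kernel bound $m_n(A^c) = \int K_n(\phi', A^c)\, m_n(d\phi')$ in your step (i), and both inner integrals in your step (ii) --- requires the fixed-$\phi$ BvM (or at minimum posterior concentration at $\widehat\beta$) to hold \emph{uniformly} over $\phi$ in the compact set. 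Pointwise almost-sure convergence for each fixed $\phi$, which is all the cited results of \citet{agnoletto2023bayesian} and \citet{miller2021asymptotic} provide, does not suffice, and dominated convergence cannot rescue the argument because the mixing measure moves with $n$. Supplying this uniformity is precisely the content of the paper's Lemma~\ref{lem:bvm}, and it is the main technical work of the whole proof: a $\delta$-net over $[b^{-1}, a^{-1}]$, a Hellinger-distance bound on $\int |q_n^{\lambda^{-1}} - q_n^\phi|\, dh$ controlled through Jensen-type inequalities on the normalizing constants $Z_n^\rho$ (exploiting that $\log Q_\phi = \rho \log Q_1$ with $\rho = \phi^{-1}$, so the constants are comparable across the net), and the almost-sure Laplace asymptotics for $\log Z_n^\rho$. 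Asserting uniformity directly from the Laplace expansion, as you do, leaves the theorem unproved.

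A secondary point: you identify tightness of $m$ as ``the main obstacle'' and propose a Lyapunov/drift argument, but under Condition A this is a non-issue --- $g(\phi \mid \beta)$ is by assumption a truncated inverse gamma supported on $[a,b]$ and $\pi(\beta)$ is supported on a ball, so $m$ automatically charges a fixed compact set; the paper's second remark flags exactly this simplification as deliberate. So the drift argument is unnecessary effort, and the truly delicate point is the uniformity above, not compactness. On the positive side, your route to the second assertion --- showing $\sup_{\phi'} K_n(\phi', A^c) \to 0$ for neighborhoods $A$ of $\phi_0$ and concluding $m_n \to \delta_{\phi_0}$ from invariance of the $\phi$-marginal kernel --- is arguably cleaner than the paper's two-stage construction (first the auxiliary one-sweep law $\pit_*$ with $m_\sigma \to \delta_{\phi_0}$ via Condition G and generalized dominated convergence, then a second $\delta$-net to transfer from $m_\sigma$ to $m$); your mechanism for the self-correction of $\phi$ (concentration of the PL update at $N^{-1}\sum_i Z_i^2 \to \phi_0$) is the same as the paper's use of Condition G. But both your step (i) and step (ii) are conditional on the missing uniform lemma, so as written the proposal has a real hole rather than a stylistic difference.
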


When the model is correctly specified, Theorem~\ref{thm:bvm} implies correct
asymptotic coverage of posterior credible intervals for $\beta$. While we do not
prove this, under misspecification we generally expect that a version of
Theorem~\ref{thm:bvm} will hold, but $\phi_0 \, H(\beta_0)^{-1} / N$ will no
longer be the asymptotic covariance matrix of $\widehat\beta$, and so inferences
will not be correctly calibrated.

\begin{remark}
  Theorem~\ref{thm:bvm} is proved for the PLP rather than the BBQ posterior
  simply because it is more analytically tractable, however we expect
  Theorem~\ref{thm:bvm} to also hold for the BBQ posterior. As argued by
  \citet{davidian1988note}, the extended quasi-likelihood does not lead to
  consistent estimates of $(\phi, \eta)$; in our own experiments, it was instead
  observed that the EQP concentrates around $\beta_0$ but does not possess
  good Frequentist properties, and that the posteriors for $\phi$ and $\eta$ do
  not concentrate around their true values. This is unfortunate because the EQP
  avoids the problems listed above: inference for all components of $\theta$ is
  based on the single likelihood $Q^+(\theta)$, and so the stationary
  distribution of the Markov chain will exist and be equal to $Q^+(\theta) \,
  \Pi(d\theta) / \int Q^+(\theta) \, \Pi(d\theta)$ so long as the denominator is
  finite. Because of these issues, we do not study the theoretical properties of
  the EQP posterior.
\end{remark}

\begin{remark}
  The regularity conditions used in the appendix are optimized for ease of proof
  rather than generality, and our results should be regarded primarily as a
  proof-of-concept. The most substantial simplification we make is that $\phi_0$
  is assumed to lie in an interval $[a,b]$, and $\beta_0$ in some bounded set,
  with both sets known a-priori.
\end{remark}

\section{Illustrations}
\label{sec:illustrations}

\subsection{Experiments With Parametric Models}
\label{sec:bernstein-von-mises}

We first provide some numerical support for the Bernstein-von Mises results
given in Section~\ref{sec:theoretical-properties}. That a Bernstein-von Mises
result will hold does not follow from our theory, even for parametric models,
because we usually cannot use the ideal two-step Gibbs sampler. We consider data
generated according to the quasi-Poisson model with
\begin{align*}
  Y_i = \phi \, Z_i \qquad \text{where} \qquad 
  Z_i \sim \Poisson\{\mu_\theta(X_i) / \phi\}
\end{align*}
and $\mu_\theta(x) = \exp(\beta_0 + \beta_1 \, x)$. Data generated according to
this model satisfies the quasi-Poisson relation $\Var(Y_i \mid X_i = x) = \phi
\, \mu_\theta(x)$. We considered a flat prior $\pi(\beta) \propto 1$, $X_i \sim
\Normal(0,1)$, $N = 2000$, and use the BBQ approach to quantify uncertainty in
$\phi$. Hamiltonian Monte Carlo \citep[HMC,][]{neal2011mcmc} was used to sample
from the quasi-posterior.

\begin{figure}
  \centering
  \includegraphics[width = .7\textwidth]{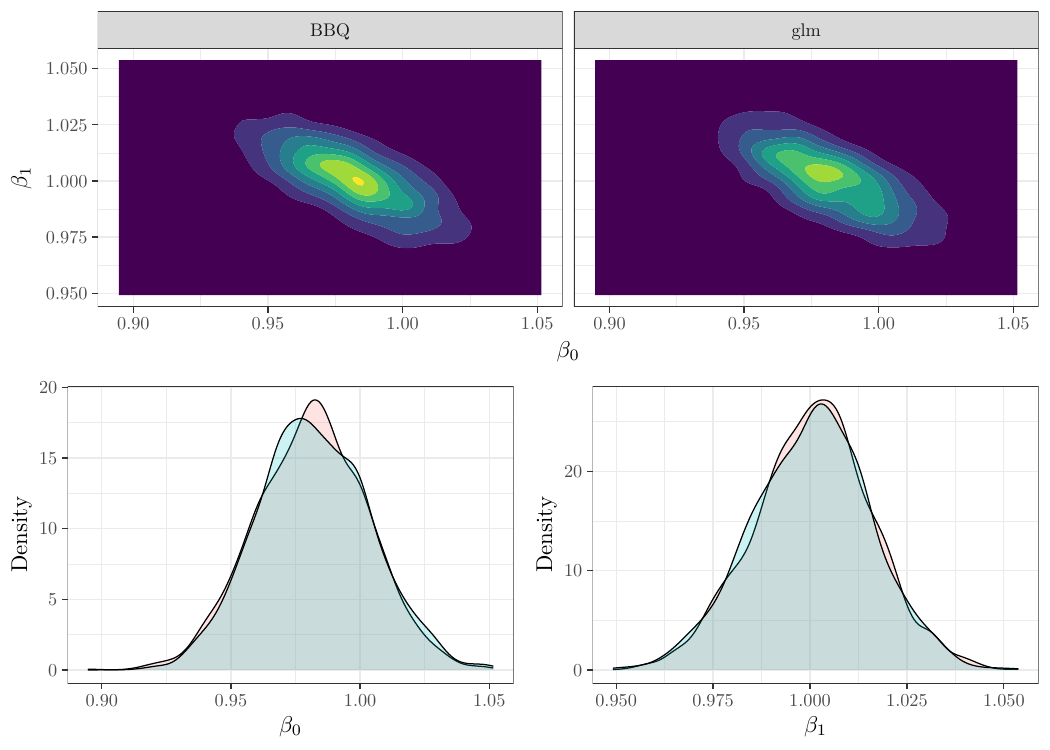}
  \caption{Joint and marginal posterior densities of the parameters from the
    quasi-Poisson simulation in Section~\ref{sec:bernstein-von-mises}. The
    quasi-posterior under the BBQ prior is given by ``BBQ'' while the asymptotic
    distribution with mean equal to the MLE and variance equal to the inverse
    Fisher information corresponds to ``glm.''}
  \label{fig:bvm}
\end{figure}

The posterior distribution is displayed in Figure~\ref{fig:bvm} and compared
with the asymptotic normal distribution obtained from standard quasi-likelihood
theory using the pilot estimator $\widehat\phi = (N-2)^{-1} \sum_i (Y_i -
\widehat \mu_i)^2 / \widehat\mu_i$. We see strong agreement between the
quasi-posterior and the asymptotic normal distribution, giving some
further evidence that the good theoretical properties of the two-step Gibbs
sampler likely extend to methods that replace sampling $\pit(\beta\mid\phi)$
with sampling a Markov transition density that merely leaves
$\pit(\beta\mid\phi)$ invariant.

Next, we consider a quasi-power model with $V_\eta(\mu) = \mu^\kappa$. This
model contains the quasi-Poisson $(\eta = 1)$ and quasi-gamma $(\eta = 2)$ as
submodels. We set $\mu_\theta(x) = \exp(x^\top\beta)$ with $\beta_j = 1.2 /
\sqrt 5$ for the $\kappa = 1$ setting and $\beta_j = 2 / \sqrt{5}$ for the
$\kappa = 2$ setting; these values were chosen to highlight differences between
the quasi-gamma and quasi-Poisson models without being unrealistic. We evaluate
the following models:
\begin{itemize}
\item \textbf{Quasi-Poisson:} a quasi-Poisson GLM fit using the \texttt{glm}
  function in \texttt{R} via maximum quasi-likelihood.
\item \textbf{Quasi-gamma:} a quasi-gamma GLM fit using the \texttt{glm}
  function in \texttt{R} via maximum quasi-likelihood.
\item \textbf{Bayesian Bootstrap Poisson:} samples of $\beta$ are collected by
  first sampling $p \sim \Dirichlet(1, \ldots, 1)$ and then
  solving the equation $\sum_i \frac{p_i (Y_i - \mu_\beta(X_i))}{\mu_\beta(X_i)}
    \, X_i = 0$.
\end{itemize}

\begin{figure}
  \centering
  \includegraphics[width=1\textwidth]{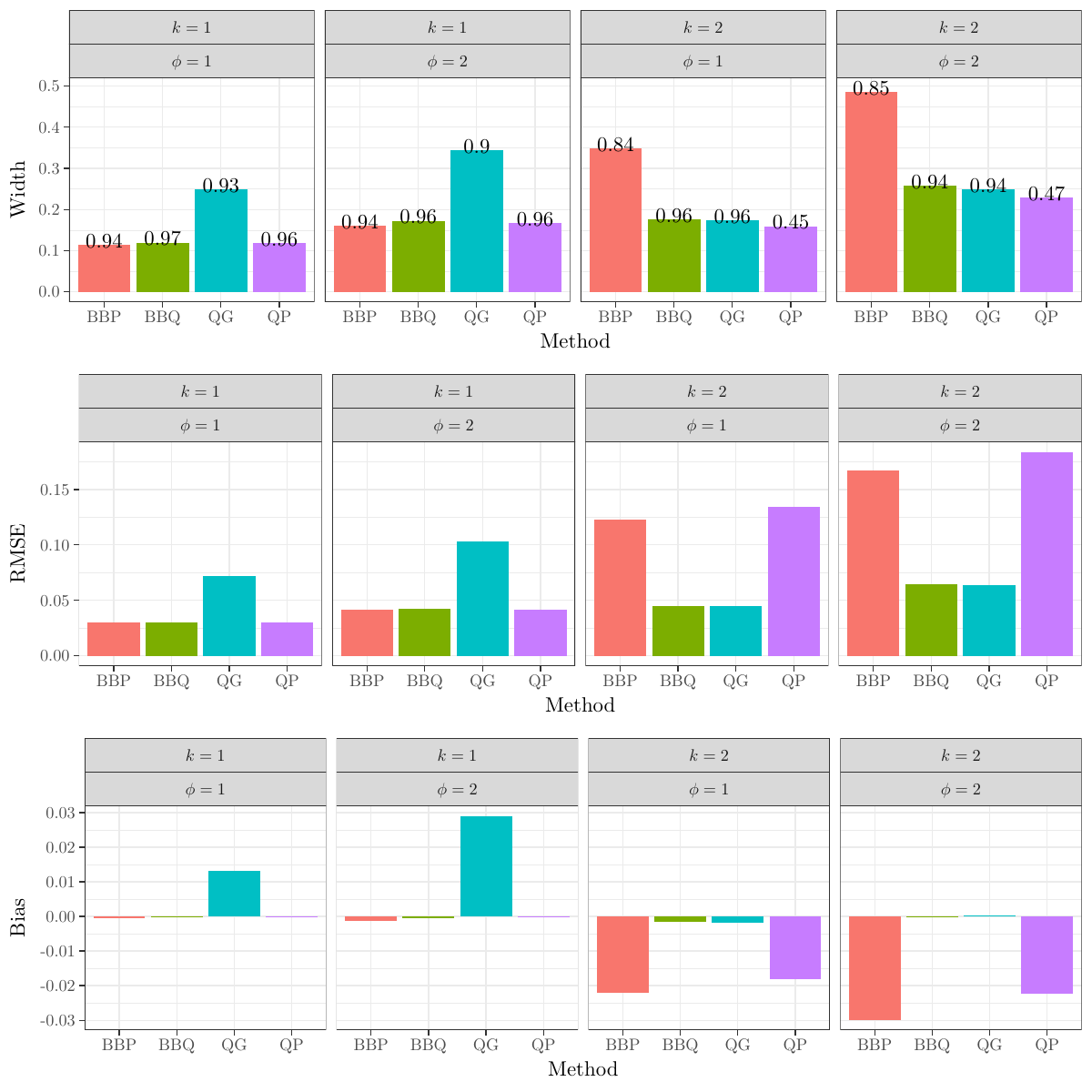}
  \caption{Width of 95\% intervals (top), root mean-squared error (middle), and
    bias (bottom) results for the power-variance experiment. BBP denotes the
    Bayesian Bootstrap Poisson model, BBQ denotes Bayesian Bootstrap
    Quasi-likelihood, QG denotes quasi-gamma, and QP denotes quasi-Poisson.
    Coverage rates are given above the widths in the top
    row. \label{fig:bbq_experiment}}
\end{figure}

Figure~\ref{fig:bbq_experiment} displays the root mean squared error, estimated
bias, and interval width/coverage associated to nominal 95\% interval estimates.
We see that the BBQ approach is the only approach that performs uniformly well
across all settings, and that the performance of all of the methods is not
greatly impacted by $\phi$. When $\kappa = 1$ (quasi-Poisson holds) we see that
the BBQ, Quasi-Poisson, and BB Poisson methods are all roughly equivalent, while
the quasi-gamma estimates have higher mean-squared errors and have substantially
wider intervals to obtain coverage rates just below the nominal. On the other
hand, when $\kappa = 2$ (quasi-gamma holds), the methods based on the Poisson
likelihood perform poorly in terms of interval width, coverage, RMSE, and bias,
while the BBQ approach essentially matches the performance of the quasi-gamma
model.

\subsection{Bayesian Nonparametric Quasi-Gamma Model}
\label{sec:bayesian-nonparametric-quasi-gamma}

We next consider a quasi-gamma model where $V(\mu) = \mu^2$ under a data
generating mechanism with $\mu_\theta(x) = e^{r(x)}$ where $r(x) = \sin(\pi \,
x_1 \, x_2) + 2 (x_3 - 0.5)^2 + x_4 + x_5 / 2$; the form of $r(x)$ is taken from
\citet{friedman1991multivariate}, where it was used to illustrate the benefits
of multivariate adaptive regression splines (MARS). For the outcome, we set
\begin{math}
  [Y_i^{-1} \mid X_i = x] \sim \Gam\{\alpha, (\alpha - 1) \, \mu_\theta(x)\}
\end{math}
which corresponds to a quasi-gamma model with $\phi = (\alpha - 2)^{-1}$. 

We compare the quasi-gamma model to the conventional (misspecified) gamma model
that takes $[Y_i \mid X_x = x] \sim \Gam\{\alpha, \alpha / \mu_\theta(x)\}$. We
note that the gamma and quasi-gamma models differ \emph{only} in how the
parameter $\phi$ is estimated, with the quasi-gamma model using BBQ and the
gamma model using the posterior derived from the gamma likelihood.

\begin{figure}
  \centering
  \includegraphics[width=.8\textwidth]{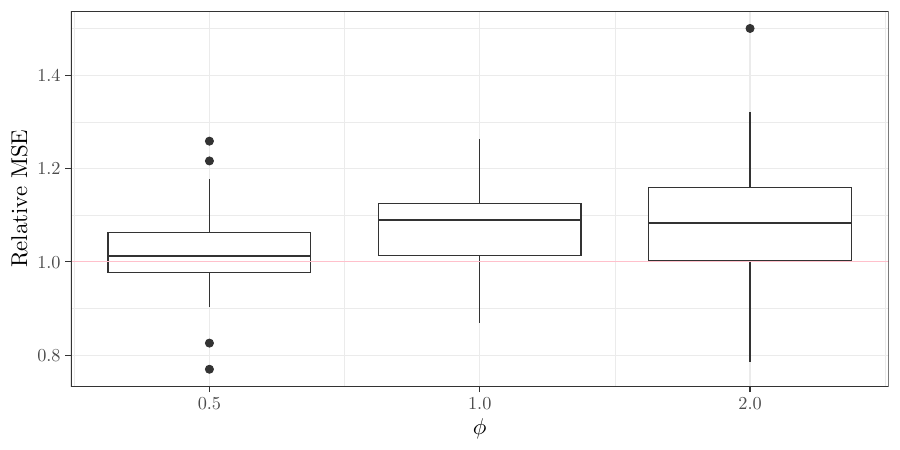}
  \caption{Boxplots of mean squared error of the gamma model relative to the quasi-gamma
    model for the experiment described in
    Section~\ref{sec:bayesian-nonparametric-quasi-gamma}.}
  \label{fig:invgam}
\end{figure}

The models were compared in terms of relative mean squared error, defined as
$\frac{\sum_i \{r(X_i) - \widehat r_{\text{G}}(X_i)\}^2}{\sum_i \{r(X_i) -
  \widehat r_{\text{QG}}\}^2}$, where $\widehat r_{\text{G}}(x)$ and $\widehat
r_{\text{QG}}(x)$ are the Bayes estimates of $r(x)$ under the gamma and
quasi-gamma models, respectively. The results of 200 replications of this
experiment under the settings $N = 250$, $P = 10$, and $\phi \in \{0.5, 1, 2\}$
are given in Figure~\ref{fig:invgam}. We see that the quasi-gamma model
typically outperforms the gamma model on this problem, with a larger gap
existing for larger values of $\phi$. In terms of practical differences, we
found that the gamma model tends to substantially underestimate $\phi$, while
BBQ tends to slightly overestimate $\phi$, with this accounting for the mild
differences in predictive performance.

\subsection{Bayesian Nonparametric Quasi-Power Model}
\label{sec:bayesian-nonparametric-quasi-power}

To assess the reliability of our implementation of the quasi-power model, we
use the quasi-power ground truth
\begin{math}
  [Y_i \mid X_i = x] \sim \Gam\{\alpha(x), \beta(x)\}
\end{math}
where $\alpha(x) = e^{r(x) / 2} / \phi$ and $\beta(x) = e^{-r(x) / 2} / \phi$;
this corresponds to a quasi-power model with $\mu_\theta(x) = e^{r(x)}$ and
$\kappa = 1.5$. We take $r(x)$ to be the same as in Section~\ref{sec:bayesian-nonparametric-quasi-gamma}.

We compare the quasi-power model to the quasi-Poisson and quasi-gamma models,
which correspond to special cases of the quasi-power model with $\kappa = 1$ and
$2$ respectively. We set $N = 250$, $P = 10$, $\phi \in \{0.5, 1, 2\}$, and
replicated the experiment $200$ times. Models were compared in terms of mean
squared error $N^{-1} \sum_i \{\mu_\theta(X_i) - \widehat \mu(X_i)\}^2$ where
$\widehat \mu(x)$ is the Bayes estimator of $\mu_\theta(x)$, as well as the
coverage per-experiment of the pointwise nominal 95\% credible intervals for the
$\mu_\theta(X_i)$'s and the average length of these intervals.

\begin{figure}
  \centering
  \includegraphics[width=.9\textwidth]{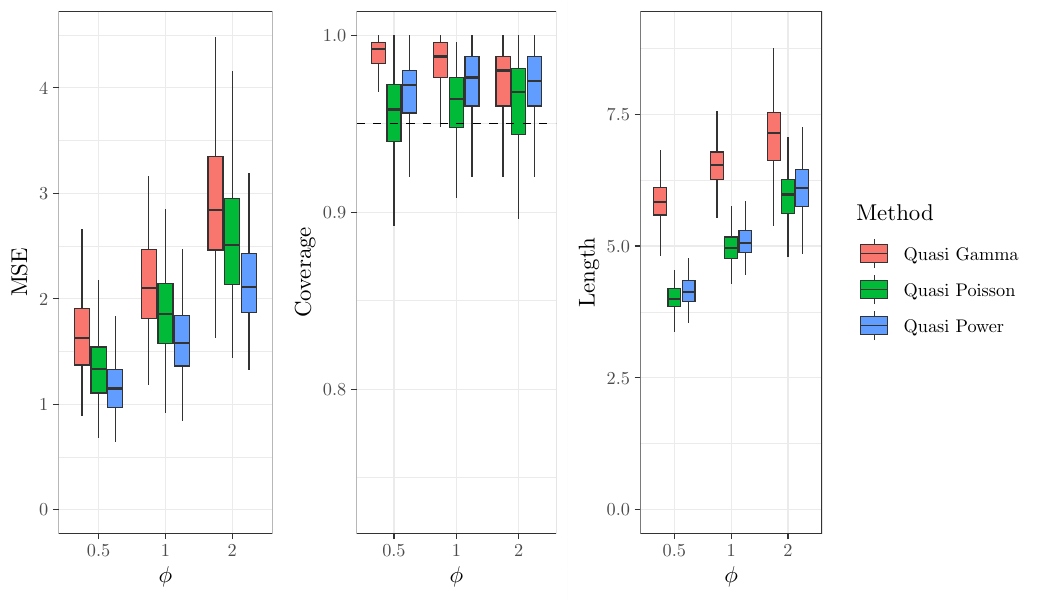}
  \caption{Results from simulation study of
    Section~\ref{sec:bayesian-nonparametric-quasi-power}. Boxplots displays the
    squared error, coverage, and interval length associated to $\mu(X_i)$ for $i
    = 1,\ldots, N$, averaged over $N$, for the 200 replications.}
  \label{fig:quasi-power_01}
\end{figure}

Results are given in Figure~\ref{fig:quasi-power_01}. As anticipated, the
quasi-power modestly improves upon the quasi-gamma and quasi-Poisson models in
terms of predictive performance. Interestingly, all models appear to have
conservative credible intervals, with the quasi-Poisson model coming the closest
to the nominal coverage levels.

\begin{figure}
  \centering
  \includegraphics[width=.9\textwidth]{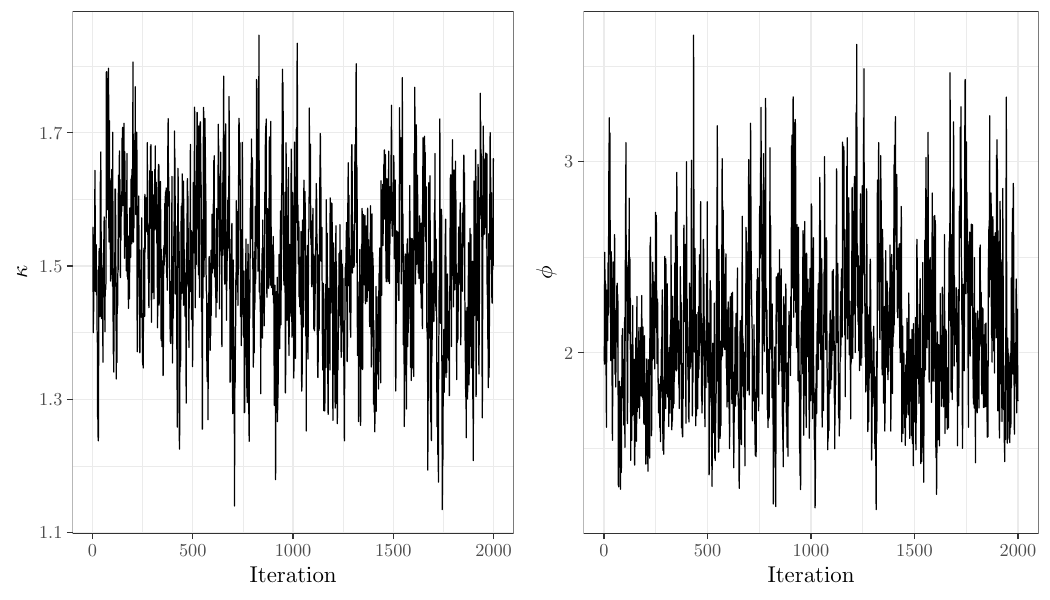}
  \caption{Mixing of $\phi$ and $\kappa$ for a single replication of the
    simulation experiment in
    Section~\ref{sec:bayesian-nonparametric-quasi-power} with $N = 2500$.}
  \label{fig:quasi-power_02}
\end{figure}

Figure~\ref{fig:quasi-power_02} displays the mixing of $\phi$ and $\kappa$ on a
single representative simulation iteration with $N = 2500$. We see that $\phi$
and $\kappa$ have posteriors centered around their true values, and mix
reasonable well over 2,000 post-burn-in samples.

\subsection{Quasi-Multinomial With Dirichlet Outcome Data}
\label{sec:quasi-multinomial-with-dirichlet-outcome-data}

Our last synthetic data experiment applies the quasi-multinomial model to
Dirichlet outcome data. We consider the data generating process $Y_i \sim
\Dirichlet\{\rho \, \mu_1(X_i), \rho \, \mu_2(X_i), \rho \, \mu_3(X_i)\}$ where
$\mu_k(x) \propto \exp\{r_k(x)\}$ and
\begin{align*}
  r_1(x) &= 2 x_1 + x_2, & r_2(x) &= x_1 + 4 x_2 x_3 & r_3 &= x_2 + 2 x_3.
\end{align*}
We consider $N = 1000$ observations, $X_i \sim \Uniform([0,1]^5)$, and set $\rho
= 0.5$, which corresponds to a quasi-multinomial model with $\phi = 2/3$.

\begin{figure}
  \centering
  \includegraphics[width=1\textwidth]{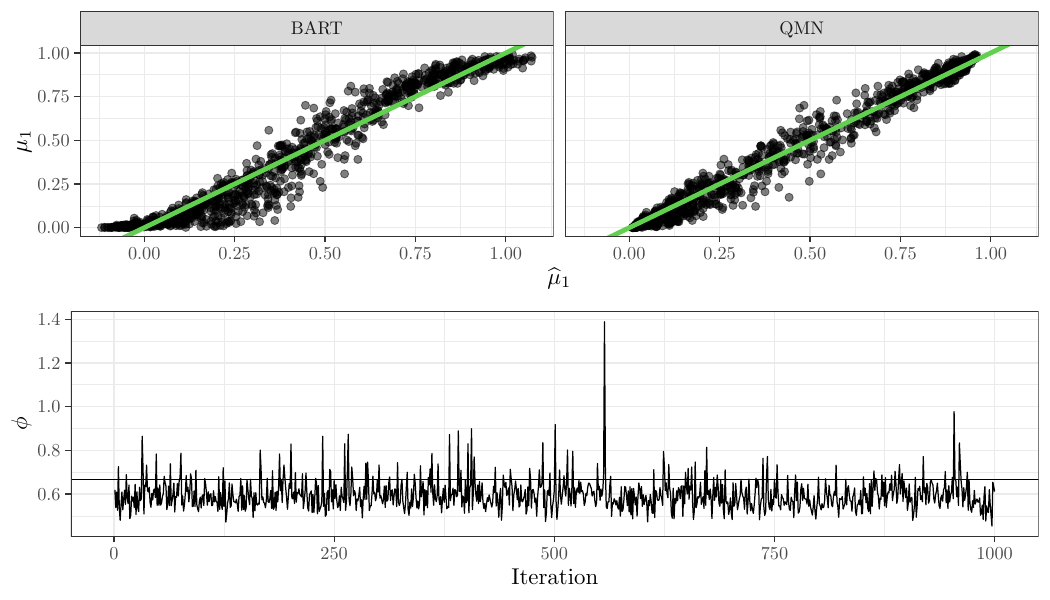}
  \caption{Top: comparison of the Bayes estimate ($x$-axis) of $\mu_1(X_i)$ to
    the true value of $\mu_1(X_i)$ ($y$-axis) for the BART normal regression
    model (left) and quasi-Multinomial model (right); the line $y = x$ is
    provided for comparison. Bottom: traceplot of
    $\phi$ for the quasi-Multinomial model.}
  \label{fig:qmn}
\end{figure}

Figure~\ref{fig:qmn} gives results for a single simulated dataset, comparing the
quasi-multinomial model to the standard BART regression model implemented in the
\texttt{dbarts} package that directly estimates $\mu_1(x)$; results for this
single simulated dataset are typical of replications. We see that the
quasi-multinomial model generally performs better than BART, with no obvious
bias and estimates falling quite close to the lines $x = y$.

\begin{table}
  \centering
  \begin{tabular}{lrrrrr}
  \toprule
  Method & RMSE & Width & Coverage  \\ 
  \midrule
  BART   & 0.08 & 0.28  & 0.92      \\ 
  QMN    & 0.05 & 0.20  & 0.95      \\ 
  \bottomrule
\end{tabular}

  \caption{Results of the simulation experiment in Section~\ref{sec:quasi-multinomial-with-dirichlet-outcome-data}}
  \label{fig:qmn-results}
\end{table}

This experiment was replicated 100 times, and results are presented in
Table~\ref{fig:qmn-results}. Here, RMSE refers to the RMSE in estimating
$\mu_1(X_i)$ given by $\sqrt{\frac{1}{N} \sum_i \{\mu_1(X_i) - \widehat
  \mu_1(X_i)\}^2}$, Width refers to the average interval width of the 95\%
highest posterior density (HPD) credible interval, and Coverage refers to the
average coverage of the HPD interval. By properly accounting for the presence of
heteroskedasticity we see that the quasi-multinomial model outperforms BART on
all three metrics, with a much lower RMSE being attained and nominal coverage
being achieved with shorter intervals.

\subsection{The Medical Expenditure Panel Survey}
\label{sec:meps}

The medical expenditure panel survey (MEPS) is an ongoing survey on the usage of
the healthcare system in the United States by individuals, insurance companies,
and hospitals. We consider as an outcome $Y_i$ the total medical expenditure
incurred during 2011 by a subset of 9,426 adult women who incurred at least some
medical expenditure. Parametric quasi-likelihood methods have been used in the
past for this data and it has been demonstrated that the data is highly
heteroskedastic with a mean-variance relation well-described by a quasi-power
relation $V(\mu) = \mu^\kappa$ \citep[][who found $\kappa \approx 1.5$ on the
subset of the data they examined]{natarajan2008variance}. Our interest is in
determining what factors lead to higher forecasts of medical expenditures, with
variables of interest including demographic variables, measurements of an
individual's health, and direct measures of the usage of the healthcare system.
Predictors include age, insurance status, race, whether an individual is a
regular smoker, income, self-assessed physical health status, mental health
status, body mass index, number of prescription refills during the year,
education level, number of dentist trips during the year, and indicators of
major health diagnoses (diabetes, heart attack, stroke, cancer, and arthritis).

\begin{figure}
  \centering
  \includegraphics[width=1\textwidth]{Code/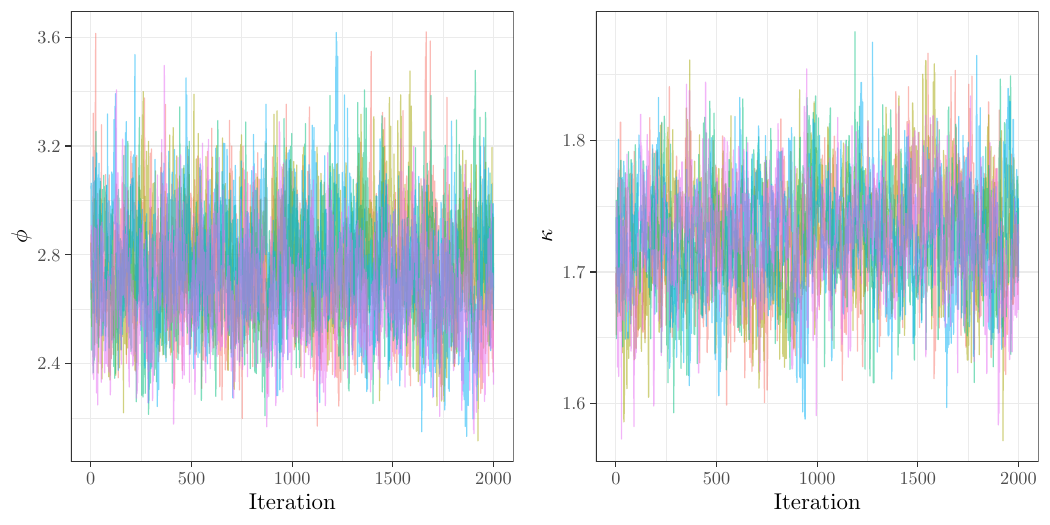}
  \caption{Traceplots for the parameters $\phi$ and $\kappa$ on the MEPS
    dataset for five independent Markov chains.}
  \label{fig:mixing}
\end{figure}

We fit the quasi-power model to this data, with $\log \mu_\theta(x) = r(x)$ and
$V_\eta(\mu) = \mu^\kappa$, using the Laplace approximation to update the tree
topologies as described in Section~\ref{sec:prior-specification} and the
pseudo-likelihood approach to update $(\phi, \kappa)$. For this analysis we ran
five Markov chains in parallel for a total of 2000 iterations per chain, with
the first 1000 iterations of each chain discarded to burn-in. Traceplots for the
parameters $\phi$ and $\kappa$ are displayed in Figure~\ref{fig:mixing}, and we
see that the chains converged rapidly and there were no notable issues with the
mixing. Bearing in mind that our analysis is based on a different subpopulation
than \citet{natarajan2008variance}, our results are quite similar to theirs in
terms of the plausible values of $\kappa$.

\begin{figure}
  \centering
  \includegraphics[width = 1\textwidth]{Code/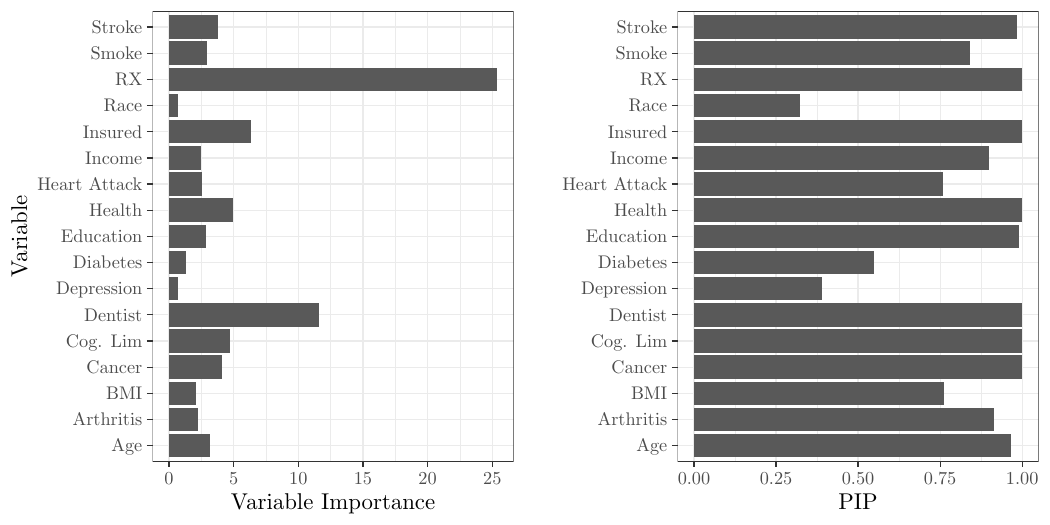}
  \caption{Posterior summarization of variable importances (left) and posterior
    variable inclusion probabilities (right). Variable importances correspond to
    the average number of splits in the ensemble that make use of the listed
    variable.}
  \label{fig:varimp}
\end{figure}

Having verified that the chain is mixing well and that the results for $\kappa$
are broadly in agreement with the existing literature, we move onto summarizing
the posterior distribution. We begin by examining the \emph{posterior inclusion
  probabilities} for each variable \citep{linero2018abayesian} as well as their
variable importances as defined by \citet{chipman2010bart} (defined to be the
posterior mean of the number of times a variable is used in a splitting rule).
Results are displayed in Figure~\ref{fig:varimp}, which shows roughly that the
most important indicators of future medical expenditures are first direct usage
of the healthcare system (number of prescription refills and number of dentist
trips), followed by indicators of poor health (cognitive limitations, cancer,
self-perceived health status etc). Variables that had lower posterior inclusion
probabilities include race, diabetes status, and depression status.

Next, we use the posterior projection approach of \citet{woody2021model} to
create summaries of the function $r(x)$, which can be useful both for
interpreting the model and for identifying interesting subpopulations. For each
sampled $r(x)$ we compute an optimal summary $\widetilde r(x)$ in a given class
of summary functions $\mathcal H$, defined by $\widetilde r = \arg \min_{g \in
  \mathcal H} \sum_i \{r(X_i) - \widetilde r(X_i)\}^2$. We use the 
\emph{summary $R^2$} 
\begin{align*}
  R^2 = 1 - \frac{\sum_i \{r(X_i) - \widetilde r(X_i)\}^2}
  {\sum_i \{r(X_i) - \bar r\}^2} 
  \qquad \text{where} \qquad 
  \bar r = N^{-1} \sum_i r(X_i)
\end{align*}
to measure the quality of the summary, with $R^2$ close to $1$ implying that
$\widetilde r(x)$ is a good approximation of $r(x)$ at the observed values of
the covariates.

In preliminary analysis, we found that the class of generalized additive models
(GAMs) that take $\widetilde r(x) = \sum_{j = 1}^P h_j(x)$ provided very strong
summaries of $r(x)$, with the \emph{partial effect} $h_j(x)$ given by a natural
cubic spline \citep[see][for a review]{wood2006generalized} for the continuous
predictors. The summary $R^2$ for the GAM summaries were found to concentrate
between 95\% and 98\%.

\begin{figure}
  \centering
  \includegraphics[width=.85\textwidth]{Code/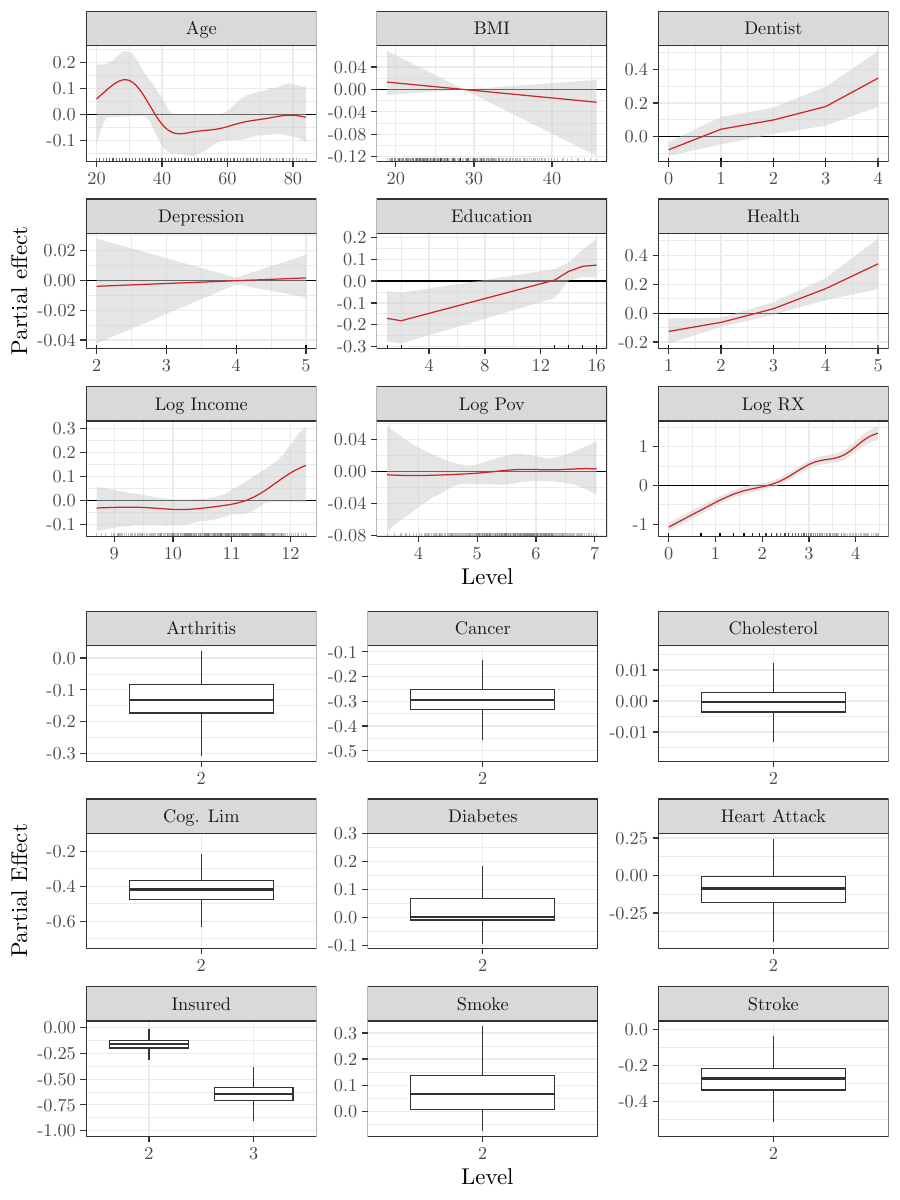}
  \caption{Summary of the quasi-power BART model fit to the MEPS dataset.
    Continuous variables display the posterior mean and confidence bands for
    the associated term of the projection, while categorical variables display
    the posterior median, 50\% credible interval, and 95\% credible interval
    using a boxplot.}
  \label{fig:gam}
\end{figure}

The posterior distributions of the $h_j$'s from the GAM summary of the BART fit
are given in Figure~\ref{fig:gam}; log transformations were used prior to
fitting the GAM summary for the number of perscription refills (RX), income, and
poverty level. For the continuous variables, we see clear increasing trends in
the number of dentist visits, perceived health, and number of prescription
refills. For the categorical variables, the strongest effects are associated to
cancer diagnosis, cognitive limitations, and insurance status.

\section{Discussion}
\label{sec:discussion}

In this paper, we propose to combine quasi-likelihood methods with Bayesian
nonparametric priors on the regression function $\mu_\theta(x)$ to obtain
methods that are both highly practical and robust to model misspecification. The
methods we propose depend only on correctly specifying a mean-variance relation
of the form $\Var_\theta(Y_i \mid X_i = x) = \frac{\phi}{\omega}
V\{\mu_\theta(x)\}$. We develop Bayesian nonparametric versions of the
quasi-binomial, quasi-Poisson, and quasi-gamma models,
and show that all of these approaches tare amenable to simple Bayesian
backfitting algorithms \citep{chipman2010bart} when $\mu_\theta(x)$ is modeled
using a suitable BART prior.

An obvious limitation of our approach is that we are still required to correctly
specify the variance function in order to obtain valid inferences. The variance
model can fail either because (i) the form of $V(\mu)$ is not correctly
specified or (ii) the variance depends on $X_i$ through features other than
$\mu_\theta(X_i)$. We partially alleviate this by allowing for the variance
function itself to be estimated jointly with the dispersion parameter $\phi$
using the Bayesian bootstrap quasi-likelihood (BBQ) update. Both of these issues
can be partially fixed by allowing the dispersion parameter $\phi$ to depend on
covariates $\phi(x)$ using, for example, the \emph{multiplicative decision tree}
model of \citet{pratola2020heteroscedastic}. We defer such extensions to future
work.

This work is part of a larger program of attempting to perform Bayesian
inference while avoiding the specification of a data generating mechanism.
Quasi-likelihood uses only a specification of the mean function (which we model
nonparametrically) and a mean-variance function relationship. One might hope
further to either (i) eliminate the specification of the mean-variance
relationship or (ii) develop a method that is \emph{robust} to misspecification
of the mean-variance relation. A simple approach would be to fully commit to the
Bayesian bootstrap and \emph{define} $\theta$ as the minimum of a
population-level objective function; this works well for parametric problems, as
we have already seen here in our simulation experiments, but it is not clear how
to apply such methods in a non-parametric setting. Other approaches, such as
Bayesian exponentially tilted likelihood (BETEL) or Bayesian method of moments
(BMOM) are limited in scope due to computational challenges
\citep{schennach2005bayesian,chib2018bayesian,yin2009bayesian}. A large part of
the appeal of quasi-likelihood methods for us is that they allow for the use of
simple Bayesian backfitting algorithms to fit BART-based models, which can no
longer be used with these other techniques. Future work will pursue
computationally efficient alternatives to BETEL to further reduce the modeling
assumptions.

\bibliographystyle{apalike}
\bibliography{references.bib}

\begin{thebibliography}{}

\bibitem[Agnoletto et~al., 2023]{agnoletto2023bayesian}
Agnoletto, D., Rigon, T., and Dunson, D.~B. (2023).
\newblock Bayesian inference for generalized linear models via
  quasi-posteriors.

\bibitem[Arnold and Press, 1989]{arnold1989compatible}
Arnold, B.~C. and Press, S.~J. (1989).
\newblock Compatible conditional distributions.
\newblock {\em Journal of the American Statistical Association},
  84(405):152--156.

\bibitem[Bissiri et~al., 2016]{bissiri2016general}
Bissiri, P.~G., Holmes, C.~C., and Walker, S.~G. (2016).
\newblock {A General Framework for Updating Belief Distributions}.
\newblock {\em Journal of the Royal Statistical Society Series B: Statistical
  Methodology}, 78(5):1103--1130.

\bibitem[Casella, 2001]{casella2001empirical}
Casella, G. (2001).
\newblock Empirical {B}ayes {G}ibbs sampling.
\newblock {\em Biostatistics}, 2(4):485--500.

\bibitem[Chipman et~al., 2010]{chipman2010bart}
Chipman, H.~A., George, E.~I., and McCulloch, R.~E. (2010).
\newblock {BART: Bayesian additive regression trees}.
\newblock {\em The Annals of Applied Statistics}, 4(1):266 -- 298.

\bibitem[Davidian and Carroll, 1988]{davidian1988note}
Davidian, M. and Carroll, R.~J. (1988).
\newblock {A Note on Extended Quasi-Likelihood}.
\newblock {\em Journal of the Royal Statistical Society: Series B
  (Methodological)}, 50(1):74--82.

\bibitem[Dorie, 2023]{dorie2023dbarts}
Dorie, V. (2023).
\newblock dbarts: Discrete {B}ayesian additive regression trees sampler.
\newblock R package version 0.9-23.

\bibitem[Friedman, 1991]{friedman1991multivariate}
Friedman, J.~H. (1991).
\newblock Multivariate adaptive regression splines.
\newblock {\em The Annals of Statistics}, 19(1):1--67.

\bibitem[Friedman, 2001]{friedman2001greedy}
Friedman, J.~H. (2001).
\newblock {Greedy function approximation: A gradient boosting machine.}
\newblock {\em The Annals of Statistics}, 29(5):1189 -- 1232.

\bibitem[Gelman and Speed, 1993]{gelman2018characterizing}
Gelman, A. and Speed, T.~P. (1993).
\newblock {Characterizing a Joint Probability Distribution by Conditionals}.
\newblock {\em Journal of the Royal Statistical Society: Series B
  (Methodological)}, 55(1):185--188.

\bibitem[Greco et~al., 2008]{greco2008robust}
Greco, L., Racugno, W., and Ventura, L. (2008).
\newblock Robust likelihood functions in {B}ayesian inference.
\newblock {\em Journal of Statistical Planning and Inference},
  138(5):1258--1270.

\bibitem[Hastie and Tibshirani, 2000]{hastie2000bayesian}
Hastie, T. and Tibshirani, R. (2000).
\newblock {Bayesian backfitting (with comments and a rejoinder by the authors}.
\newblock {\em Statistical Science}, 15(3):196 -- 223.

\bibitem[Hill et~al., 2020]{hill2020bayesian}
Hill, J., Linero, A., and Murray, J. (2020).
\newblock Bayesian additive regression trees: A review and look forward.
\newblock {\em Annual Review of Statistics and Its Application}, 7(1):251--278.

\bibitem[Kapelner and Bleich, 2016]{kapelner2016bart}
Kapelner, A. and Bleich, J. (2016).
\newblock bart{M}achine: Machine learning with {B}ayesian additive regression
  trees.
\newblock {\em Journal of Statistical Software}, 70(4):1–40.

\bibitem[Lin, 2006]{lin2006quasi}
Lin, L. (2006).
\newblock Quasi {B}ayesian likelihood.
\newblock {\em Statistical Methodology}, 3(4):444--455.

\bibitem[Linero, 2017]{linero2017review}
Linero, A.~R. (2017).
\newblock A review of tree-based {B}ayesian methods.
\newblock {\em Communications for Statistical Applications and Methods},
  24(6):543--559.

\bibitem[Linero, 2018]{linero2018abayesian}
Linero, A.~R. (2018).
\newblock Bayesian regression trees for high-dimensional prediction and
  variable selection.
\newblock {\em Journal of the American Statistical Association},
  113(522):626--636.

\bibitem[Linero, 2022]{linero2022generalized}
Linero, A.~R. (2022).
\newblock Generalized {B}ayesian additive regression trees models: Beyond
  conditional conjugacy.

\bibitem[Linero et~al., 2020]{linero2020semiparametric}
Linero, A.~R., Sinha, D., and Lipsitz, S.~R. (2020).
\newblock Semiparametric mixed-scale models using shared bayesian forests.
\newblock {\em Biometrics}, 76(1):131--144.

\bibitem[Linero and Yang, 2018]{linero2018bayesian}
Linero, A.~R. and Yang, Y. (2018).
\newblock Bayesian regression tree ensembles that adapt to smoothness and
  sparsity.
\newblock {\em Journal of the Royal Statistical Society: Series B (Statistical
  Methodology)}, 80(5):1087--1110.

\bibitem[Lyddon et~al., 2019]{lyddon2019general}
Lyddon, S.~P., Holmes, C.~C., and Walker, S.~G. (2019).
\newblock General {B}ayesian updating and the loss-likelihood bootstrap.
\newblock {\em Biometrika}, 106(2):465--478.

\bibitem[McCullaugh and Nelder, 1989]{mcculaugh1989generalized}
McCullaugh, P. and Nelder, J.~A. (1989).
\newblock {\em Generalized Linear Models}.
\newblock Chapman \& Hall/CRC, 2 edition.

\bibitem[Meyn and Tweedie, 2012]{meyn2012markov}
Meyn, S. and Tweedie, R. (2012).
\newblock {\em Markov Chains and Stochastic Stability}.
\newblock Communications and Control Engineering. Springer London.

\bibitem[Miller, 2021]{miller2021asymptotic}
Miller, J.~W. (2021).
\newblock Asymptotic normality, concentration, and coverage of generalized
  posteriors.
\newblock {\em Journal of Machine Learning Research}, 22(168):1--53.

\bibitem[Murray, 2021]{murray2021log}
Murray, J.~S. (2021).
\newblock Log-linear {B}ayesian additive regression trees for multinomial
  logistic and count regression models.
\newblock {\em Journal of the American Statistical Association},
  116(534):756--769.

\bibitem[Natarajan et~al., 2008]{natarajan2008variance}
Natarajan, S., Lipsitz, S.~R., Fitzmaurice, G., Moore, C.~G., and Gonin, R.
  (2008).
\newblock Variance estimation in complex survey sampling for generalized linear
  models.
\newblock {\em Journal of the Royal Statistical Society Series C: Applied
  Statistics}, 57(1):75--87.

\bibitem[Neal, 2003]{neal2003slice}
Neal, R.~M. (2003).
\newblock {Slice sampling}.
\newblock {\em The Annals of Statistics}, 31(3):705 -- 767.

\bibitem[Neal, 2011]{neal2011mcmc}
Neal, R.~M. (2011).
\newblock {MCMC} using {H}amiltonian dynamics.
\newblock In Books, S., Gelman, A., Jones, G.~L., and Meng, X.~L., editors,
  {\em Handbook of {M}arkov Chain {M}onte {C}arlo}. Chapman and Hall/CRC.

\bibitem[Nelder and Pregibon, 1987]{nelder1987extended}
Nelder, J.~A. and Pregibon, D. (1987).
\newblock {An extended quasi-likelihood function}.
\newblock {\em Biometrika}, 74(2):221--232.

\bibitem[Nelder and Wedderburn, 1972]{nelder1972generalized}
Nelder, J.~A. and Wedderburn, R. W.~M. (1972).
\newblock {Generalized Linear Models}.
\newblock {\em Royal Statistical Society. Journal. Series A: General},
  135(3):370--384.

\bibitem[Pratola et~al., 2020]{pratola2020heteroscedastic}
Pratola, M.~T., Chipman, H.~A., George, E.~I., and McCulloch, R.~E. (2020).
\newblock Heteroscedastic {BART} via multiplicative regression trees.
\newblock {\em Journal of Computational and Graphical Statistics},
  29(2):405--417.

\bibitem[Rasmussen and Williams, 2005]{rasmussen2005gaussian}
Rasmussen, C.~E. and Williams, C. K.~I. (2005).
\newblock {\em Gaussian Processes for Machine Learning}.
\newblock The MIT Press.

\bibitem[Ročková and van~der Pas, 2020]{rockova2020posterior}
Ročková, V. and van~der Pas, S. (2020).
\newblock Posterior concentration for {B}ayesian regression trees and forests.
\newblock {\em The Annals of Statistics}, 48(4):2108--2131.

\bibitem[Schennach, 2005]{schennach2005bayesian}
Schennach, S.~M. (2005).
\newblock {Bayesian exponentially tilted empirical likelihood}.
\newblock {\em Biometrika}, 92(1):31--46.

\bibitem[Siddhartha~Chib and Simoni, 2018]{chib2018bayesian}
Siddhartha~Chib, M.~S. and Simoni, A. (2018).
\newblock Bayesian estimation and comparison of moment condition models.
\newblock {\em Journal of the American Statistical Association},
  113(524):1656--1668.

\bibitem[Smyth and Verbyla, 1999]{smyth1999adjusted}
Smyth, G.~K. and Verbyla, A.~P. (1999).
\newblock Adjusted likelihood methods for modelling dispersion in generalized
  linear models.
\newblock {\em Environmetrics: The official journal of the International
  Environmetrics Society}, 10(6):695--709.

\bibitem[Sparapani et~al., 2021]{sparapani2021nonparametric}
Sparapani, R., Spanbauer, C., and McCulloch, R. (2021).
\newblock Nonparametric machine learning and efficient computation with
  {B}ayesian additive regression trees: The {BART} {R} package.
\newblock {\em Journal of Statistical Software}, 97(1):1–66.

\bibitem[Ventura et~al., 2010]{ventura2010default}
Ventura, L., , Cabras, S., and Racugno, W. (2010).
\newblock Default prior distributions from quasi- and quasi-profile
  likelihoods.
\newblock {\em Journal of Statistical Planning and Inference},
  140(11):2937–2942.

\bibitem[Ventura and Racugno, 2016]{ventura2016pseudo}
Ventura, L. and Racugno, W. (2016).
\newblock Pseudo-likelihoods for bayesian inference.
\newblock In Di~Battista, T., Moreno, E., and Racugno, W., editors, {\em Topics
  on Methodological and Applied Statistical Inference}, pages 205--220, Cham.
  Springer International Publishing.

\bibitem[Wedderburn, 1974]{weddernburn1974quasi}
Wedderburn, R. W.~M. (1974).
\newblock {Quasi-likelihood functions, generalized linear models, and the
  Gauss—Newton method}.
\newblock {\em Biometrika}, 61(3):439--447.

\bibitem[Wood, 2006]{wood2006generalized}
Wood, S.~N. (2006).
\newblock {\em Generalized Additive Models: An Introduction with {R}}.
\newblock Chapman and Hall/CRC, second edition.

\bibitem[Woody et~al., 2021]{woody2021model}
Woody, S., Carvalho, C.~M., and Murray, J.~S. (2021).
\newblock Model interpretation through lower-dimensional posterior
  summarization.
\newblock {\em Journal of Computational and Graphical Statistics},
  30(1):144--161.

\bibitem[Yin, 2009]{yin2009bayesian}
Yin, G. (2009).
\newblock {Bayesian generalized method of moments}.
\newblock {\em Bayesian Analysis}, 4(2):191 -- 207.

\end{thebibliography}

\appendix

\section{Gibbs Sampler for the Quasi-Poisson Model}

Let $\Leaves(\Tree_t)$ denote the set of leaves of tree $\Tree_t$ and let
$\lambda_\ell$ denote the leaf node parameter associated to leaf $\ell$.
Further, let $\Tree_{-t}$ denote the set of trees excluding tree $t$, let
$\sM_{-t}$ denote the set of leaf node parameters for all trees excluding tree
$t$, and let $\zeta_i = \sum_{m \ne t} g(X_i; \Tree_m, \sM_m)$ denote the sum of
all trees except for the $t$th tree. Following \citet{chipman2010bart}, we use a
two-stage sampling algorithm to update the tree parameters $(\Tree_1, \ldots,
\Tree_T)$. We first compute the integrated likelihood of tree $\Tree_t$ given
$\Tree_{-t}$ and $\sM_{-t}$ as
\begin{align*}
  L(\Tree_t) = \int \prod_{\ell \in \Leaves(\Tree_t)} \prod_{i: X_i \leadsto \ell}
  q_\phi(Y_i \mid \zeta_i + \lambda_\ell) \ d\lambda_\ell,
\end{align*}
where $X_i \leadsto \ell$ denotes the event that $X_i$ is associated to leaf
$\ell$ of tree $t$ and
\begin{align*}
  q_\phi(y \mid \zeta) = \exp\left(-\frac{D\{y , g^{-1}(\zeta)\}}{2\phi}\right),
\end{align*}
is the quasi-likelihood of $y$ given $\zeta$.

For the quasi-Poisson model with $\mu_\theta(x) = \exp\{r(x)\}$ we have $-D(y,
\mu) / 2 = c(y) + y \log \mu - e^{-\mu}$ for some function $c(y)$. The
likelihood associated to $(\sM_t, \Tree_t)$ is therefore proportional to
\begin{align*}
  \prod_{\ell \in \Leaves(\Tree_t)} \prod_{i: X_i \leadsto \ell}
  \exp\left(\frac{Y_i \zeta_i + Y_i \lambda_\ell - e^{\zeta_i} 
  e^{\lambda_\ell}}{\phi}\right)
  \propto \prod_{\ell \in \Leaves(\Tree)} 
  \exp\left(A_\ell \lambda_\ell - B_\ell e^{\lambda_\ell}\right),
\end{align*}
where $A_\ell = \sum_{i: X_i \leadsto \ell} \frac{Y_i}{\phi}$ and $B_\ell =
\sum_{i: X_i \leadsto \ell} \frac{e^{\zeta_i}}{\phi}$. Assuming that
$\lambda_\ell \iid \log \Gam(a, b)$, we can compute the integrated likelihood of
tree $\Tree_t$ from this as
\begin{align*}
  L(\Tree) = \prod_{\ell \in \Leaves(\Tree)} 
  \int_{-\infty}^\infty \frac{b^a}{\Gamma(a)} \exp(a \lambda - b e^\lambda) 
  \times \exp(A_\ell \lambda - B_\ell e^\lambda) \ d\lambda
  = 
  \prod_{\ell \in \Leaves(\Tree)} \frac{b^a}{\Gamma(a)} \times
  \frac{\Gamma(A_\ell + a)}{(B_\ell + b)^{A_\ell + a}}.
\end{align*}
Additionally, we can see that the full-conditional of $\lambda_\ell$ under
this model is given by $\lambda_\ell \sim \log \Gam(A_\ell + a,
b + B_\ell)$. This leads to the following two-step sampling algorithm for
updating $(\Tree_t, \sM_t)$:
\begin{enumerate}
\item Propose a tree $\Tree' \sim Q(\Tree_t \to \Tree')$ using, for example, the
  \texttt{Birth}, \texttt{Depth}, or \texttt{Perturb} moves described by
  \citet{kapelner2016bart} and its appendix.
\item Set $\Tree_t = \Tree'$ if $U \le \frac{L(\Tree') \, \pi(\Tree') \,
    Q(\Tree' \to \Tree_t)}{L(\Tree_t) \, \pi(\Tree_t) \, Q(\Tree_t \to
    \Tree')}$, where $U \sim \Uniform(0,1)$; otherwise, leave $\Tree_t$
  unmodified.
\item For each $\ell \in \Leaves(\Tree_t)$, sample $\lambda_\ell \sim \Gam(a +
  A_\ell, b + B_\ell)$.
\end{enumerate}

\section{Gibbs Sampler for Quasi-Gamma}

For the quasi-gamma model we have
\begin{align*}
  -\frac{D(y, \mu)}{2} = c(y) - \frac{y}{\mu} - \log(\mu).
\end{align*}
We then specify the model $\mu_\theta(x) = e^{-r(x)}$. Defining $\zeta_i$ and
the other tree-based quantities as we did for the quasi-Poisson model
the integrated likelihood of $\Tree_t$ given
$(\Tree_{-t}, \sM_{-t})$ is therefore given by
\begin{align*}
 L(\Tree_t) = &\prod_{\ell \in \Leaves(\Tree_t)} \int \prod_{i : X_i \leadsto \ell}
    \exp\left( \frac{\lambda - Y_i \, e^{\zeta_i} e^\lambda}{\phi} \right)
    \frac{b^a}{\Gamma(a)} \exp(a \lambda - b e^\lambda) \ d\lambda
  \\&\qquad\propto
  \prod_{\ell \in \Leaves(\Tree_t)} \int 
  \exp\left( A_\ell \lambda - B_\ell e^\lambda\right)
  \frac{b^a}{\Gamma(a)} \exp(a\lambda - b e^\lambda) \ d\lambda
  \\&\qquad=
  \prod_{\ell \in \Leaves(\Tree_t)}  \frac{b^a}{\Gamma(a)} \times
  \frac{\Gamma(a + A_\ell)}{(b + B_\ell)^{a + A_\ell}},
\end{align*}
where $A_\ell = N_\ell / \phi$ where $N_\ell$ is the number of observations
assigned to leaf node $\ell$ of tree $\Tree_t$, and $B_\ell = \sum_{i: X_i
  \leadsto \ell} Y_i \, e^{\zeta_i} / \phi$. Similarly, we also have the full
conditional $\lambda_\ell \sim \log \Gam(a + A_\ell, b + B_\ell)$. The algorithm
is now identical to that for the quasi-Poisson model.

\section{Gibbs Sampler for the Quasi-Power Model}

The quasi-power model uses the variance function $V(\mu) = \mu^\kappa$ where
$\kappa$ is estimated based on the data. By routine calculations, the
quasi-likelihood associated to the quasi-power model is given by
\begin{align*}
  q_\phi(y \mid \mu, \kappa)
  = \exp\left\{ \frac{1}{\phi} \left( \frac{y \mu^{1-\kappa}}{1-\kappa}
  - \frac{\mu^{2-\kappa}}{2 - \kappa} \right) \right\},
\end{align*}
for $\kappa \ne 1, 2$, with the special cases of $\kappa = 1,2$ corresponding to
the quasi-Poisson model and the quasi-gamma model, respectively. We consider the
exponential link $\mu_\theta(x) = \exp\{r(x)\}$.

As usual, we let $\zeta_i = \sum_{m \ne t} g(X_i ; \Tree_m, \sM_m)$. The
integrated quasi-likelihood associated $\Tree_t$ given $(\kappa, \phi, \Data,
\Tree_{-t}, \sM_{-t})$ under this model is given by
\begin{align*}
  &\prod_{\ell \in \Tree_t} \int \prod_{i: X_i \leadsto \ell}
    \exp\left\{ \frac{1}{\phi} \left(
    \frac{Y_i e^{\zeta_i (1 - \kappa)} e^{\lambda(1 - \kappa)}}{1 - \kappa} -
    \frac{e^{\zeta_i(2 - \kappa)} e^{\lambda(2 - \kappa)}}{2 - \kappa}
    \right) \right\} \,
    \Normal(\lambda \mid 0, \sigma^2_\lambda)
    \ d\lambda
  \\&= 
  \prod_{\ell \in \Tree_t} \int
  \exp\left\{ 
  \frac{1}{\phi} \left( \frac{A_\ell e^{\lambda(1 - \kappa)}}{1 - \kappa} - 
  \frac{B_\ell e^{\lambda(2 - \kappa)}}{2 - \kappa}\right)
  \right\}
  \Normal(\lambda \mid 0, \sigma^2_\lambda)
  \ d\lambda
\end{align*}
where $A_\ell = \sum_{i: X_i \leadsto \ell} Y_i e^{\zeta_i(1 - \kappa)}$ and
$B_\ell = \sum_{i : X_i \leadsto \ell} e^{\zeta_i (2 - \kappa)}$.

Next, we define $\Lambda_\ell(\lambda) = \frac{1}{\phi} \left( \frac{A_\ell
    e^{\lambda(1-\kappa)}}{1-\kappa} - \frac{B_\ell
    e^{\lambda(2-\kappa)}}{2-\kappa} \right)$. Directly optimizing
$\Lambda_\ell(\lambda)$ shows that it is maximized at $\widehat \lambda_\ell =
\log(A_\ell / B_\ell)$. Proceeding by the Laplace approximation, we
expand about $\widehat \lambda$ to second order to get the approximation
\begin{align}
  \label{eq:treelik}
  \begin{split}
  L(\Tree_t) &\approx\prod_{\ell \in \Tree_t}  e^{\Lambda_\ell(\widehat \lambda_\ell)} \int
    \exp\left\{ -\frac{(\lambda - \widehat \lambda_\ell)^2I(\widehat \lambda)}{2} 
    \right\}
    \Normal(\lambda \mid 0, \sigma^2_\lambda) \ d\lambda
  \\&=
  \prod_{\ell \in \Tree_t} e^{\Lambda_\ell(\widehat \lambda_\ell)}
  \sqrt{\frac{I^{-1}(\widehat \lambda_\ell)}
  {\sigma^2_\lambda + I^{-1}(\widehat \lambda_\ell)}}
  \exp\left\{ -\frac{1}{2}
  \frac{\widehat \lambda^2}
  {(I^{-1}(\widehat\lambda_\ell) + \sigma^2_\lambda)} \right\},
  \end{split}
\end{align}
where $I(\widehat \lambda_\ell) = \left.\frac{\partial^2}{\partial \lambda^2}
  \Lambda_\ell(\lambda)\right|_{\lambda = \widehat\lambda_\ell}$. We use
\eqref{eq:treelik} to compute an approximate acceptance probability for the
Bayesian backfitting algorithm.

After updating $\Tree_t$, we can sample the $\lambda_\ell$'s from their full
conditionals, which are proportional to
\begin{align*}
    \exp\left\{ 
      \frac{1}{\phi} \left( \frac{A_\ell e^{\lambda_\ell(1 - \kappa)}}{1 - \kappa} - 
                            \frac{B_\ell e^{\lambda_\ell(2 - \kappa)}}{2 - \kappa}\right)
    \right\}
    \Normal(\lambda_\ell \mid 0, \sigma^2_\lambda).
\end{align*}
Applying the Laplace approximation again, this full conditional is approximately
proportional to
\begin{align*}
  \exp\left\{ -\frac{(\lambda_\ell - \widehat \lambda_\ell)^2I(\widehat \lambda_\ell)}{2} 
  \right\}
  \Normal(\lambda_\ell \mid 0, \sigma^2_\lambda),
\end{align*}
which is the kernel of a normal distribution with mean and variance given by
\begin{align*}
  m = \frac{I(\widehat \lambda_\ell) \widehat \lambda_\ell}{I(\widehat \lambda_\ell) + \sigma^{-2}_\lambda}
  \qquad \text{and} \qquad
  v = \frac{1}{I(\widehat\lambda_\ell) + \sigma^{-2}_\lambda}.
\end{align*}
This approximation can be further refined by applying, for example, several
rounds of slice sampling \citep{neal2003slice} to the exact full conditional
after sampling $\lambda_\ell$.

As described above, the Gibbs sampler proposed for the quasi-power model is
approximate, and does not hold the exact quasi-posterior invariant (even
ignoring the updates for $\phi$ and $\kappa$). \citet{linero2022generalized}
gives a general recipe for converting Laplace approximations such as these into
proposals that leave the target posterior exactly invariant, and the
modifications can be applied if this is desired. In practice, we find that the
Laplace approximation is adequate on its own for identifying useful tree
topologies, and that the effect of any inaccuracy in the Laplace approximation
for updating $\lambda_\ell$ can be more-or-less eliminated by following up the
sampling with a single slice sampling update. Moreover, because of the large
amount of regularization applied by the prior --- the prior takes
$\sigma^2_\lambda$ very small, especially when many trees are used --- the
Laplace approximation is typically highly accurate without making any
corrections, even if $\Lambda_\ell(\lambda)$ is not close to quadratic.

\section{Gibbs Sampler for the Quasi-Multinomial}

Let $Z_i = n_i \, Y_i$, where $Y_i$ is modeled with a quasi-multinomial model.
Then the quasi-likelihood is given by
\begin{align*}
  Q(\theta) = \prod_i \frac{\exp\{Z_i^\top r(X_i) / \phi\}}
                             {\left[ \sum_k \exp\{r_k(X_i)\} \right]^{n_i/\phi}}.
\end{align*}
We now introduce latent variables $\xi_i \sim \Gam(n_i / \phi, \sum_k
\exp\left\{ r_k(X_i) \right\})$, which leads to the joint likelihood
\begin{align*}
  Q(\theta, \xi)
  &\propto \prod_i
    \exp\left\{ \frac{Z_i^\top r(X_i)}{\phi} - \xi_i \sum_k e^{r_k(X_i)} \right\}
    \xi_i^{n_i / \phi - 1}.
\end{align*}
Defining $\zeta_{ik} = \sum_{m \ne t} [g(X_i; \Tree_t, \sM_t)]_k$, the integrated
likelihood associated to a given tree $\Tree_t$ (conditional on the $\xi$'s as
well as $\Tree_{-t}, \sM_{-t}$, and $\phi$) is then proportional to
\begin{align*}
  &\prod_\ell  \prod_k \int \prod_{i: X_i \leadsto \ell}
  \exp\left\{ \frac{Z_{ik} \, (\zeta_{ik} + \lambda)}{\phi}
    - \xi_i e^{\zeta_{ik}} e^{\lambda} \right\}
  \, \pi(\lambda) \ d\lambda
  \\&\propto \prod_\ell \prod_k \int \exp\left( \frac{A_{\ell k} \lambda}{\phi}
      - B_{\ell k} e^\lambda\right) \, \pi(\lambda) \ d\lambda,
\end{align*}
where $A_{\ell k} = \sum_{i: X_i \leadsto \ell} Z_{ik}$ and $B_{\ell k} =
\sum_{i: X_i \leadsto \ell} \xi_i e^{\zeta_{ik}}$. Assuming the $\log \Gam(a, b)$
prior for the $\lambda_{\ell k}$'s, this gives
\begin{align*}
  L(\Tree_t) \propto
  \prod_\ell \prod_k \frac{b^a}{\Gamma(a)} \times
    \frac{\Gamma(a + A_{\ell k} / \phi)}{(b + B_{\ell k})^{a + A_{\ell k} / \phi}},
\end{align*}
and similarly to previous calculations the full conditional of $\lambda_{\ell
  k}$ is a $\log \Gam(a + A_{\ell k} / \phi, b + B_{\ell k})$ distribution.

\section{Assumptions}
\label{sec:regularity-conditions}

This section provides the regularity conditions that are required to establish
the theoretical results in Section~\ref{sec:theoretical-properties}. First, we
make some assumptions about the form of the quasi-likelihood $Q(\theta)$.

\begin{paragraph}{Condition P}
  The function $r(x) = x^\top \beta$ is related to $\mu_\theta(x)$ the relation
  $g\{\mu_\theta(x)\} = r(x)$ where $g(\mu)$ is a continuous strictly-increasing
  function. Additionally, the function $V_\eta\{g^{-1}(\psi)\}$ is continuous,
  strictly positive, and $D\{Y_i, g^{-1}(\psi)\} = 2 \int_{g^{-1}(\psi)}^y
  \frac{y - t}{V_\eta(t)} \ dt$ exists for all $\psi \in (-\infty, \infty)$.
\end{paragraph}

\vspace{1em}

Next, the following assumption is used to establish the existence of
the stationary distribution.

\begin{paragraph}{Condition A}
  The prior $\pi(\beta)$ is continuous and bounded away from $0$ on its support,
  which is assumed to be a ball $\{\beta : \|\beta\| \le R\}$, and for the PL
  approach $g(\phi \mid \beta)$ is an inverse gamma distribution
  $\text{InvGam}(N/2, \widehat \phi(\beta) N / 2)$ truncated to an interval
  $[a,b]$ with $a > 0$ and $b < \infty$. Additionally, the data $\Data$ is such
  that $Q(\theta)$ is well-defined for all $\theta$ with $\|\beta\| \le R$ and
  $\phi \in [a,b]$.
\end{paragraph}

\vspace{1em}

The following assumption states, essentially, that the model is correctly
specified.

\begin{paragraph}{Condition F}
  The true density of $[Y_i \mid X_i, \omega_i]$, denoted $f_0(y \mid x,
  \omega)$, satisfies the quasi-likelihood moment conditions for some $\theta_0
  = (\phi_0, \beta_0)$ such that $\phi_0 > 0$, $r_0(x) = x^\top \beta_0$ for
  some $\beta_0 \in \Reals^P$, and link function $g\{\mu_0(x)\} = r_0(x)$, with
  the $Y_i$'s independent . Additionally, the prior $\pi(\beta)$ is continuous
  at $\beta_0$ with $\pi(\beta_0) > 0$.
\end{paragraph}

\vspace{1em}

Regularity conditions on the structure of the quasi-likelihood and data
generating mechanism are also required. The following conditions are borrowed
from Appendix A.3 of \citet{agnoletto2023bayesian} and guarantee a pointwise
Bernstein-von Mises result for fixed $\phi$, modified to be slightly simpler to
take advantage of the boundedness assumptions in Condition A  and Condition F.
They hold very generally.

\begin{paragraph}{Condition D}
  Let $X_1, X_2, \ldots$ be the sequence of covariate values and $[Y_i \mid X_i,
  \omega_i]$ be independent and distributed according to Condition F.
  The following conditions hold:
  \begin{itemize}
  \item[D1] We have $\lim_{n \to \infty} \frac{1}{n} \sum_{i=1}^n \omega_i
    \E_{\theta_0}\left(D\{Y_i; g^{-1}(X_i^\top \beta)\} \mid X_i = x \right) =
    D_\infty(\beta)$ for some $D_\infty(\beta)$. Furthermore,
    \begin{align*}
      \lim_{n \to \infty} \sum_{i = 1}^n \frac{\omega_i^2 \Var\left( D\{Y_i; g^{-1}(X_i^\top \beta)\} \right)}{n^2} < \infty.
    \end{align*}
  \item[D2] $D(Y_i; \mu_i)$ is three-times continuously differentiable on the
    support of $\pi(\beta)$, with the third derivative uniformly bounded.
  \item[D3] The matrix $\bX^\top W \bX / N$ converges to a positive definite
    matrix $H(\beta_0)$, where $W = \diag\left( \frac{\omega_i}{g'(\mu_i)^2 \,
        V_\eta(\mu_i)} \right)$ and $\mu_i \equiv g^{-1}(r_0(X_i))$.
  \item[D4] $D_\infty(\beta) > D_\infty(\beta_0)$ if $\beta \ne \beta_0$ for all
    $\beta$ in the support of $\pi(\beta)$.
  \end{itemize}
\end{paragraph}

Finally, we assume that the moment estimator is sufficiently regular in the
sense that $\widehat \phi(\beta_n) \to \phi_0$ provided that $\beta_n \to
\beta_0$ sufficiently fast.

\begin{paragraph}{Condition G}
  Let $\widehat \phi(\beta)$ denote the moment estimator $\frac{1}{n} \sum_i
  \frac{\omega_i \{Y_i - g^{-1}(X_i^\top\beta)\}^2}{V_\eta(g^{-1}(X_i^\top\beta))}$,
  and let $\widehat \beta = \arg \min_\beta \sum_i \omega_i D\{Y_i;
  g^{-1}(X_i^\top \beta)\}$ be the maximum quasi-likelihood estimator of
  $\beta$. Then, with probability $1$, $\widehat \phi(\widehat \beta + h / \sqrt
  n) \to \phi_0$ for all $h$.
\end{paragraph}

\vspace{1em}

The above conditions also imply correct Frequentist properties of the
quasi-posterior, as under standard regularity conditions (and assuming that the
mean and variance relations are correctly specified) it is well-known that
$\widehat \beta \stackrel{{}_{\bullet}}{\sim} N(\beta_0, \phi N^{-1}
H(\beta_0)^{-1})$. 

\section{Theoretical Results}

\subsection{Existence of the Stationary Distribution}

We note that Condition A, which truncates the support of $\pi(\theta)$ to a
compact set, is overly strong and serves to eliminate technical issues in
verifying that the chain is irreducible, aperiodic, and positive recurrent,
which are standard sufficient conditions for the existence of a stationary
distribution \citep[see][Chapter 10]{meyn2012markov}.

\begin{proof}

  Let $\sS = \{(\beta, \phi) : \|\beta\| \le R, \phi \in [a,b]\}$. Note that the
  Markov transition function (MTF) of the two-step Gibbs sampler is given by
  \begin{align*}
    R\{(\beta, \phi) \to (\beta', \phi')\}
    =
    \pit(\beta' \mid \phi) \, g(\phi' \mid \beta').
  \end{align*}
  From Condition A and Condition P it follows that $\pit(\beta' \mid \phi) \,
  g(\phi' \mid \beta')$ is bounded away from $0$ on the set $\sS \times \sS$. We
  now apply Theorem 1.3.1 (ii) of \citet{meyn2012markov} to conclude the existence
  of a unique stationary distribution:
  \begin{itemize}
  \item The set $\sS$ is small because the MTF is bounded away from $0$ on $\sS$,
    as this implies that the MTF is minorized by a uniform measure on $\sS$.
  \item The hitting time for $\sS$ is trivially $1$ regardless of how the chain is
    initialized.
  \end{itemize}
  Hence Theorem 1.3.1 (i) and (ii) imply the result.

  Next, we characterize the stationary distribution $\pit(\beta, \phi)$. If we
  sample $(\beta, \phi) \sim \pit(\beta,\phi)$ and then sample the transition
  $R\{(\beta, \phi) \to (\beta', \phi')\}$ then, by definition, $(\beta', \phi')
  \sim \pit(\beta, \phi)$, but from the structure of the Gibbs sampler it is clear
  that the conditional associated to $\phi'$ is $g(\phi' \mid \beta')$; hence,
  $\pit(\beta, \phi) = \pit(\beta) \, g(\phi \mid \beta)$ for some $\pit(\beta)$.

  To characterize $\pit(\beta)$, we use the definition of the stationary
  distribution to obtain
  \begin{align*}
    \pit(\beta') \, g(\phi' \mid \beta')
    &=\int \pit(\beta, \phi) \pit(\beta' \mid \phi)
      \, g(\phi' \mid \beta') \ d\beta d\phi
    \\&= g(\phi' \mid \beta') \int \pit(\beta, \phi) 
    \pit(\beta' \mid \phi) \ d\beta  \ d\phi
    \\&= g(\phi' \mid \beta') \int m(\phi) \, \pit(\beta' \mid \phi) \ d\phi,
  \end{align*}
  where $m(\phi)$ is the marginal distribution of $\phi$ for the stationary
  distribution. Hence $\pit(\beta) = \int \pit(\beta \mid \phi) \, m(\phi) \
  d\phi$.

  To obtain the result $m(\phi) = \int g(\phi \mid \beta) \, \pit(\beta) \
  d\beta$, we repeat the same logic with the roles of $\phi$ and $\beta$
  reversed, and analyze the \emph{dual chain} that reverses the order of
  updating $\beta$ and $\phi$; note that, because our sampler is not a genuine
  Gibbs sampler, the dual chain does not have the same stationary distribution
  as the original chain, however they do have the same marginal distributions,
  which is sufficient.
\end{proof}

\subsection{Bernstein-von Mises Theorem}

We begin by proving a \emph{uniform} Bernstein-von Mises theorem, establishing
uniform convergence of the posterior $\pit(\beta \mid \phi)$. Throughout, we use
the symbol $A \lesssim B$ to mean that $A \le C B$ for some positive constant $C
> 0$ that is allowed to depend only on $(P, \beta_0, a, b, H, \pi)$.

\begin{lemma}
  \label{lem:bvm}
  Suppose that Condition P, Condition A, Condition F, and D1--D4 hold. Then
  \begin{align*}
    \sup_{\phi \in [a,b]} \int |q_n^\phi - N^\phi| \ dh \to 0 \qquad \text{almost surely}
  \end{align*}
  where $q_n^\phi(h)$ is the quasi-posterior distribution of $\sqrt n(\beta -
  \widehat \beta_n)$, $\widehat\beta_n$ is the maximum quasi-likelihood
  estimator of $\beta$, and $N^\phi$ is the density of a normal distribution
  with mean $0$ and variance $\phi \, H^{-1}(\beta_0)$.
\end{lemma}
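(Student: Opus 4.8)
The plan is to run the usual Laplace-type Bernstein--von Mises argument while carrying the factor $1/\phi \in [1/b,1/a]$ through every estimate, exploiting the key structural fact that $\phi$ enters the quasi-likelihood only as the scalar multiplier $1/\phi$ on a compact interval bounded away from $0$, so that all genuinely random limits are $\phi$-free and the only $\phi$-dependence left is a bounded deterministic factor whose worst cases are the endpoints $a$ and $b$. Write $S_n(\beta) = \sum_i \omega_i D\{Y_i; g^{-1}(X_i^\top\beta)\}$, so that $\pit(\beta \mid \phi) \propto \pi(\beta)\exp\{-S_n(\beta)/(2\phi)\}$ and $\widehat\beta_n = \arg\min_\beta S_n(\beta)$. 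After the reparametrization $h = \sqrt n(\beta - \widehat\beta_n)$ the posterior density is
\[
  q_n^\phi(h) \propto \pi(\widehat\beta_n + h/\sqrt n)\,
  \exp\left\{-\frac{1}{2\phi}\big[S_n(\widehat\beta_n + h/\sqrt n) - S_n(\widehat\beta_n)\big]\right\}.
\]

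First I would collect the three almost-sure, $\phi$-free ingredients supplied by Conditions D1--D4 and F, working on the event where they all hold: (i) a uniform law of large numbers $n^{-1}S_n(\beta) \to D_\infty(\beta)$ on the compact support $\{\|\beta\|\le R\}$ of Condition A, which follows from the mean convergence and the variance-summability (a Kolmogorov-type criterion) in D1 together with the equicontinuity furnished by the smoothness in D2; (ii) consistency $\widehat\beta_n \to \beta_0$, from (i) and the well-separated minimum D4; and (iii) $n^{-1}\nabla^2 S_n(\widehat\beta_n) \to 2H(\beta_0)$, from D3. Next I would Taylor-expand $S_n(\widehat\beta_n + h/\sqrt n) - S_n(\widehat\beta_n)$ to third order: the linear term vanishes because $\widehat\beta_n$ minimizes $S_n$, the quadratic term $\tfrac12 h^\top (n^{-1}\nabla^2 S_n) h \to h^\top H h$, and the cubic remainder is at most $C\|h\|^3/\sqrt n$ uniformly by D2. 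Dividing by $2\phi$ and using $1/\phi \le 1/a$ makes the remainder bound $(C/2a)\|h\|^3/\sqrt n$ independent of $\phi$, so on a slowly growing ball $\{\|h\|\le M_n\}$ with $M_n = o(n^{1/6})$ the remainder, the prior fluctuation $\pi(\widehat\beta_n + h/\sqrt n) - \pi(\beta_0)$, and the Hessian error all vanish uniformly in $\phi$; the unnormalized integrand then converges uniformly to $\pi(\beta_0)\exp\{-\tfrac{1}{2\phi}h^\top H h\}$, whose integral $\pi(\beta_0)(2\pi\phi)^{P/2}\det(H)^{-1/2}$ is bounded above and below uniformly over $[a,b]$.

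The main obstacle is the tail, i.e.\ showing $\sup_{\phi\in[a,b]}\int_{\|h\|>M_n} q_n^\phi\,dh \to 0$, and this is where the compactness of the support in Condition A does the real work. On the moderate shell $M_n < \|h\| \le \delta\sqrt n$ the quadratic term dominates the cubic one once $\delta$ is small, giving $S_n(\widehat\beta_n + h/\sqrt n) - S_n(\widehat\beta_n) \ge c\|h\|^2$, so the contribution is at most a Gaussian tail evaluated at the least favourable $\phi = b$. On the far region $\|\beta - \widehat\beta_n\| > \delta$ the uniform LLN (i) and the separation D4 give $n^{-1}[S_n(\beta) - S_n(\widehat\beta_n)] \ge \epsilon(\delta) > 0$ uniformly, so the integrand is bounded by $\exp\{-n\epsilon/(2b)\}$ and, even after the $n^{P/2}$ Jacobian, the far region contributes $O(n^{P/2}e^{-n\epsilon/(2b)}) \to 0$ uniformly in $\phi$.

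Finally I would assemble the pieces. The matching Gaussian tails $\int_{\|h\|>M_n}N^\phi\,dh$ are controlled uniformly because the variance is at most $bH^{-1}$, and the normalizing constant is uniformly bounded below; combining the near-region uniform convergence, the uniform tail bounds, and the uniform lower bound on the normalizer through the triangle inequality (a Scheff\'e argument applied after normalization) yields $\sup_{\phi\in[a,b]}\int|q_n^\phi - N^\phi|\,dh \to 0$ almost surely. The entire argument is uniform precisely because every $\phi$-dependence passes through the single bounded factor $1/\phi$, so that verifying the constants in the pointwise Bernstein--von Mises result of \citet{agnoletto2023bayesian} do not degenerate at the endpoints of $[a,b]$ is all that the uniformity requires.
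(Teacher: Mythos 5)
Your proposal is correct in its essentials but takes a genuinely different route from the paper. You prove the uniform Bernstein--von Mises statement directly, by running the Laplace argument from scratch and tracking the temperature-like factor $1/\phi \in [1/b, 1/a]$ through every estimate: since $\widehat\beta_n = \arg\min_\beta S_n(\beta)$ is $\phi$-free, the Taylor expansion, the moderate-shell quadratic bound, and the far-region separation bound all degrade only through the endpoints $a$ and $b$. The paper instead treats the pointwise BvM as a black box (Theorem 3 of \citet{agnoletto2023bayesian}, upgraded to almost-sure convergence via \citet{miller2021asymptotic}) and obtains uniformity by a compactness argument: it places a finite $\delta$-net $\{\lambda_k\}$ on $[b^{-1},a^{-1}]$, splits $\int|q_n^\phi - N^\phi|$ into posterior-to-posterior, posterior-to-Gaussian (on the net), and Gaussian-to-Gaussian pieces, controls the last by Pinsker's inequality, and controls the first by a Hellinger affinity computation on the partition functions $Z_n^\rho$, exploiting log-convexity of $\rho \mapsto Z_n^\rho$ (Jensen's inequality) together with Miller's almost-sure Laplace approximation of $\log Z_n^\rho$. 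Your route buys self-containedness and transparency --- the mechanism of uniformity (a single bounded scalar multiplier on a $\phi$-free random criterion) is made explicit, whereas the paper's Hellinger/partition-function manipulation is comparatively delicate; the paper's route buys economy, since it never has to re-establish consistency, tail bounds, or Hessian convergence, all of which are encapsulated in the cited theorems.

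One caveat, short of a genuine gap but worth flagging: your step (i) uses Condition D1 slightly beyond its literal statement. D1 asserts Ces\`aro-type variance control, $\lim_n n^{-2}\sum_{i\le n}\omega_i^2 \Var(D_i) < \infty$, which is not by itself a Kolmogorov-type criterion for an \emph{almost-sure} LLN (that would want, e.g., $\sum_i \Var(\omega_i D_i)/i^2 < \infty$), and your equicontinuity claim leans on D2, which bounds only the \emph{third} derivative uniformly, so uniform control of $n^{-1}S_n$ and of its Hessian near $\widehat\beta_n$ needs an additional (standard but unstated) argument for the lower-order derivatives. The paper avoids adjudicating these points by importing them through the hypotheses of the cited results; since the paper itself describes its conditions as optimized for ease of proof rather than generality, your gloss is at a comparable level of rigor, but a complete write-up of your direct approach would need to strengthen or re-derive these two lemmas explicitly.
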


\begin{proof}
  First, Theorem 3 of \citet{agnoletto2023bayesian} implies the non-uniform
  version of this statement: for all $\phi \in [a,b]$, we have $\int |q_n^\phi -
  N^\phi| \ dh \to 0$ almost surely (the statement of the result is
  in-probability, but as a corollary of results of \citet{miller2021asymptotic}
  the techniques used prove almost-sure convergence). Now, fix $\delta > 0$ and
  let $a = \lambda_1 < \lambda_2 < \cdots \lambda_K = b$ be an evenly-space
  $\delta$-cover of $[b^{-1}, a^{-1}]$, i.e., $\lambda_{k + 1} - \lambda_k \le
  \delta$; let $\mathcal K$ denote this $\delta$-cover.

  For fixed $\phi$, let $\lambda_k = \lambda$ be such that $\lambda_k \le \phi^{-1}
  \le \lambda_{k + 1}$. Then
  \begin{align*}
    \int |q^\phi_n - N^\phi| \ dh
    \le \underbrace{\int |q_n^\phi - q_n^{\lambda^{-1}}| \ dh}_{(i)}
      + \underbrace{\int |q_n^{\lambda^{-1}} - N^{\lambda^{-1}}| \ dh}_{(ii)}
      + \underbrace{\int |N^{\lambda^{-1}} - N^\phi| \ dh}_{(iii)}.
  \end{align*}
  We now bound (i) -- (iii) uniformly in $\phi$.

  \vspace{1em}

  \noindent\textbf{Second term:} From the pointwise version of the result
  we have that $\max_{\lambda \in \mathcal K} \int |q_n^{\lambda^{-1}} -
  N^{\lambda^{-1}}| \to 0$ as $n \to \infty$ almost surely; hence (ii) converges
  almost surely $0$ in probability uniformly in $\phi$.
  
  \vspace{1em}

  \noindent \textbf{Third term:} By Pinsker's inequality, the third term is
  bounded by a multiple of
  \begin{align*}
    \sqrt{\log \lambda^{-1} / \phi 
    + \phi / \lambda^{-1} - 1}.
  \end{align*}
  Letting $\Delta = \frac{\lambda^{-1}}{\phi} - 1$ and bounding $\log(1 +
  \Delta) \le \Delta$ we have the bound
  \begin{align*}
    \sqrt{\log(1 + \Delta) - \Delta/(1+\Delta)}
    \le \frac{\Delta}{\sqrt{1 + \Delta}}.
  \end{align*}
  Now, $\Delta = \frac{\phi^{-1} - \lambda}{\lambda} \le b \delta$, while
  $1 + \Delta = \phi^{-1} / \lambda \ge a / b$ so that we can bound this last
  term by $\delta \times \sqrt{b / a}$. That is,
  \begin{align*}
    \sup_\phi \int |N^{\lambda^{-1}} - N^\phi| \ dh
    \lesssim \delta.
  \end{align*}
  
  \vspace{1em}

  \noindent \textbf{First term:} We first bound the total variation distance
  with the Hellinger distance:
  \begin{align}
    \label{eq:hell}
    \int |q_n^{\lambda^{-1}} - q_n^\phi| \ dh
    \lesssim 1 - \int \sqrt{q_n^\phi \, q_n^{\lambda^{-1}}} \ dh
    = 1 - \sqrt{\frac{[Z_n^{(\rho + \lambda)/2}]^2}{Z_n^\rho \, Z_n^{\lambda}}}
  \end{align}
  where $\rho \equiv \phi^{-1}$, 
  \begin{align*}
    Z_n^\rho &= \int \exp\left\{ \log Q_\phi(\widehat \beta + h / \sqrt n) - \log Q_\phi(\widehat \beta) \right\} \, \pi(\widehat \beta + h / \sqrt n) \ dh
    \\&=
    \int \exp\left\{ \rho [\log Q_1(\widehat \beta + h / \sqrt n) - \log Q_1(\widehat \beta)]\right\} \pi(\widehat \beta + h / \sqrt n) \ dh
  \end{align*}
  and the equality in \eqref{eq:hell} (after routine calculation) comes from the
  fact that
  \begin{align*}
    q_n^\phi(h) &\propto Q_\phi(\widehat \beta + h / \sqrt n)
      \, \pi(\widehat \beta + h / \sqrt n) \qquad \text{where} 
    \\Q_\phi(\beta) &= \exp\left\{ -\frac{\sum_{i=1}^n \omega_i D\{Y_i; g^{-1}(X_i^\top\beta)\}}{2\phi} \right\}
  \end{align*}
  is the quasi-likelihood. Next, we observe that for all $\lambda_k \le \rho \le
  \lambda_{k + 1}$ we have $(Z_n^{\lambda_k})^{\rho / \lambda_k} \le Z_n^\rho
  \le (Z_n^{\lambda_{k+1}})^{\rho / \lambda_{k+1}}$ by Jensen's inequality.
  Accordingly, the Hellinger distance is bounded above by
  \begin{align*}
    1 - \frac{( Z_n^{\lambda_{k+1}})^{(\lambda_k + \rho) / 2 {\lambda_{k+1}}}}{(Z_n^{\lambda_k})^{(\lambda_k + \rho)/ 2 \lambda_k}}
    = 1 - \left( \frac{Z_n^{\lambda_{k+1}}}{Z_n^{\lambda_{k}}} \right)^{(\lambda + \rho) / (2\lambda_{k+1})}
    \times \left( Z_n^{\lambda_{k}} \right)^{\frac{\lambda + \rho}{2} \times \frac{\lambda_{k} - \lambda_{k+1}}{\lambda_{k}}}.
  \end{align*}
  Next, we note that
  \begin{align*}
    \frac{\lambda_k + \rho}{2\lambda_{k+1}}  \log \left( \frac{Z_n^{\lambda_{k+1}}}{Z_n{^\lambda_{k}}} \right)
    &\ge - \left| \log \left( \frac{Z_n^{\lambda_{k+1}}}{Z_n^{\lambda_{k}}} \right)  \right| \equiv A_{nk}
    \qquad \text{and} \qquad
    \\
    -\frac{\lambda_k + \rho}{2} \times \frac{\lambda_{k+1} - \lambda_{k}}{\lambda_{k}} \log Z_n^{\lambda_{k}}
    &\ge - \frac{\delta \lambda_{k+1}}{\lambda_{k}} |\log Z_n^{\lambda_{k}}| \equiv B_{nk}.
  \end{align*}
  As these bounds do not depend on $\phi$, we have
  \begin{align*}
    \sup_\phi \int |q_n^\phi - q_n^{\lambda^{-1}}|
    \lesssim \max_k 1 - e^{A_{nk} + B_{nk}}.
  \end{align*}
  By Theorem 4 of \citet{miller2021asymptotic}, we have the pointwise Laplace
  approximation
  \begin{align*}
    \log Z_n^\rho \to
    \log \pi(\beta_0) + \frac{P\log(\rho)}{2} + \frac{1}{2} \log |2\pi H(\beta_0)|
    \qquad \text{almost surely}.
  \end{align*}
  Hence, $A_{nk} \to \frac{P}{2} \log(\lambda_{k} / \lambda_{k+1})$ and $B_{nk}
  \to \frac{-\delta \lambda_{k+1}}{\lambda_k} \left| \log \pi(\beta_0) + \frac{P
      \log \lambda_k}{2} + \frac{1}{2} \log |2\pi H(\beta_0)|\right|$. Letting
  \begin{align*}
    A = -\frac{P}{2} \frac{\delta}{a}
    \qquad \text{and} \qquad
    B = -\delta \left\{|\log \pi(\beta_0)| + \frac{P (|\log a| + |\log b|)}{2} +
    \frac{1}{2} |2\pi H(\beta_0)|\right\}
  \end{align*}
  we have $\lim A_{nk} \ge A$ and $\lim B_{nk} \ge B$ almost surely.
  Therefore
  \begin{align*}
    \limsup \sup_\phi \int |q_n^\phi - q_n^{\lambda^{-1}}|
    \lesssim 1 - e^{A+ B} = 1 - e^{-\delta C} \le \delta C
  \end{align*}
  where $C = -(A + B) / \delta > 0$ does not depend on $\delta$.

  \noindent \textbf{Putting Everything Together:} From the above arguments, we
  have
  \begin{align*}
    \limsup \sup_\phi \int |q^\phi_n - N^\phi| \ dh \lesssim
    \delta + 0 + \delta.
  \end{align*}
  Because this holds for all $\delta > 0$, we must have
  \begin{align*}
    \lim_{n \to \infty} \sup_\phi \int |q^\phi_n - N^\phi| \ dh = 0
  \end{align*}
  almost surely.
\end{proof}

We use this uniform Bernstein-von Mises theorem to establish the result for the
quasi-posterior defined in
Section~\ref{sec:theoretical-properties}. 

\begin{proof}
  We start by showing that the Bernstein-von Mises result applies to a slightly
  different distribution, which is sampled from via the following algorithm:
  \begin{enumerate}
  \item Initialize $\sigma$ with any value in $[a, b]$.
  \item Sample $\beta' \sim \pit(\beta' \mid \sigma)$.
  \item Sample $\phi \sim g(\phi \mid \beta')$.
  \item Sample $\beta \sim \pit(\beta \mid \phi)$.
  \end{enumerate}
  Let $m_\sigma$ denote the marginal distribution of $\phi$ under this algorithm
  and $\pit_*$ the marginal distribution of $\beta$ under this algorithm. By the
  triangle inequality we have
  \begin{align*}
    \int |\pit_* - N(\cdot \mid \beta_0, \phi_0 N^{-1} \, H(\beta_0)^{-1})|
    &\le 
    \int |\pit_* - \pit(\cdot \mid \phi_0)|  
    \\&\qquad +\int |\pit(\cdot \mid \phi_0) -
      N(\cdot \mid \beta_0, \phi_0 N^{-1} \, H(\beta_0)^{-1})|.
  \end{align*}
  By Lemma~\ref{lem:bvm}, the second term converges to $0$ almost surely, so it
  suffices to show that the first term $A = \int |\pit_*(\beta) - \pit(\beta \mid
  \phi_0)| \ d\beta$ also converges to $0$ almost surely. Next, we rewrite
  $\pit_*(\beta)$ as
  \begin{align*}
    \int \pit(\beta \mid \phi) \, m_\sigma(\phi) \ d\phi
    \qquad \text{where} \qquad
    m_\sigma(\phi) = \int g(\phi \mid \beta') \, \pit(\beta' \mid \sigma).
  \end{align*}
  By the triangle inequality we have
  \begin{align*}
    A &\le \int |\pit(\beta \mid \phi) - \pit(\beta \mid \phi_0)| \,
      m_\sigma(\phi) \ d\phi \ d\beta
    \\&\le \int |\pi(\beta \mid \phi) -
      N(\beta \mid \widehat \beta, \phi \Sigma)|
    \, m_\sigma(\phi) + 
    |N(\beta \mid \widehat \beta, \phi \Sigma) -
    N(\beta \mid \widehat \beta, \phi_0 \Sigma)| \ m_\sigma(\phi)
    \\&\qquad + 
    |\pit(\beta \mid \phi_0) -
    N(\beta \mid \widehat \beta, \phi_0 \Sigma)| m_\sigma(\phi) \ d\phi \ d\beta
  \end{align*}
  where $\Sigma = (N H(\beta_0))^{-1}$. By another application of
  Lemma~\ref{lem:bvm}, it suffices to show that $\int |N^\phi - N^{\phi_0}| \,
  m_\sigma(\phi) \ d\phi \ d\beta \to 0$ almost surely, where $N^\phi$ is
  defined as in proof of Lemma~\ref{lem:bvm}. To show this, we show that
  $m_\sigma(\phi) \to \delta_{\phi_0}(\phi)$ in distribution (almost surely).
  Letting $\psi(\phi)$ denote an arbitrary bounded and continuous function, we
  have
  \begin{align*}
    \int \psi(\phi) \, m_\sigma(\phi) \ d\phi
    &= \iint \psi(\phi) \, g(\phi \mid \widehat \beta + h / \sqrt n)
      q_n^\sigma(h) \ d\phi \ dh
    \\&= \int \E_g\{\psi(\phi) \mid \widehat \beta + h / \sqrt n\}
      q_n^\sigma(h) \ dh
  \end{align*}
  where $q_n^\sigma$ is as defined in the proof of Lemma~\ref{lem:bvm}. Because
  $\psi(\phi)$ is bounded, $q_n^\sigma \to N^\sigma$ pointwise by the Laplace
  approximation, and $\int q_n^\sigma \ dh \to \int N^\sigma \ dh$ we can apply
  the generalized dominated convergence theorem to conclude that
  \begin{align*}
    \lim_{n \to \infty} \int \psi(\phi) \, m(\phi) \ d\phi
    = \int \lim_{n \to \infty} \E_g\{\psi(\phi) \mid \widehat \beta + h / \sqrt n\} q^\sigma(h) \ d\phi
  \end{align*}
  almost surely. By the truncated gamma specification of $g(\phi \mid \beta)$ we
  have
  \begin{align*}
    [\phi \mid \widehat \beta + h / \sqrt n]
    \sim \text{InvGam} \left\{
    \frac{n}{2}, \frac{n \widehat \phi(\widehat \beta + h / \sqrt n)}{2}
    \right\}
    \text{ truncated to } [a, b].
  \end{align*}
  By Condition G (i.e., $\widehat \phi(\widehat \beta + h / \sqrt n) \to \phi_0$
  for all $h$ almost surely) and the fact that $\phi_0 \in [a,b]$ by assumption,
  it is straight-forward to show that this distribution converges to a
  $\delta_{\phi_0}$ distribution (almost surely), and hence $\lim_{n \to \infty}
  \E_g\{\psi(\phi) \mid \widehat \beta + h / \sqrt n\} = \psi(\phi_0)$ almost
  surely. Therefore
  \begin{align*}
    \lim_{n \to \infty} \int \psi(\phi) \, m(\phi) \ d\phi
    = \int \psi(\phi_0) N^\sigma(h) \ dh = \psi(\phi_0),
  \end{align*}
  so that $m_\sigma(\phi) \to \delta_{\phi_0}(\phi)$ in distribution (almost
  surely). It follows that
  \begin{align*}
    \int |N^\phi - N^{\phi_0}| m_\sigma(\phi) \ dh \ d\phi
    = \int \Xi(\phi) \, m_\sigma(\phi) \ d\phi \to \Xi(\phi_0) = 0
  \end{align*}
  where $\Xi(\phi) = \int |N^\phi - N^{\phi_0}| \ dh$.
  
  Having established the result for $\pit_*$, we now establish the result for
  $\pit$. By an identical argument to the one above, it suffices to show that
  $m(\phi) \to \delta_{\phi_0}(\phi)$ in distribution (almost surely), which is
  also part of the conclusion of the theorem. We use a strategy similar to the
  one used to prove Lemma~\ref{lem:bvm}. Let $b^{-1} = \lambda_0 < \lambda_1 <
  \cdots < \lambda_K = a^{-1}$ denote a $\delta$-net of $[b^{-1}, a^{-1}]$ and
  for each $\phi$ let $\sigma_\phi = \lambda_k^{-1}$ such that $\lambda_k \le
  \phi^{-1} \le \lambda_{k+1}$. Using the integral representation of $m(\phi)$
  we have (for an arbitrary bounded and continuous function $\psi(\phi)$)
  \begin{align*}
    \int \psi(\phi) m(\phi) \ d\phi
    = \int \psi(\phi) \, g(\phi \mid \beta) \,
      \pit(\beta \mid \rho) \ m(\rho) \ d\rho \ d\beta \ d\phi
  \end{align*}
  Adding and subtracting $\pit(\beta \mid \sigma_\rho)$ we get
  \begin{align*}
    &\int \psi(\phi) g(\phi \mid \beta)
      \{\pit(\beta \mid \rho) - \pit(\beta \mid \sigma_\rho)\} \,
      m(\rho) \ d\rho \ d\beta \ d\phi
    \\&\qquad + \int \psi(\phi) g(\phi \mid \beta) \,
      \pit(\beta \mid \sigma_\rho) \, m(\rho) \ d\rho \ d\beta \ d\phi
  \end{align*}
  We conclude by showing that the first term can be made arbitrarily small while
  the second term above tends to $\psi(\phi_0)$. For the first term we apply the
  triangle inequality, boundedness of $\psi(\phi_0)$, and change of variables to
  get
  \begin{align*}
    &\left| \int \psi(\phi) g(\phi \mid \beta)
      \{\pit(\beta \mid \rho) - \pit(\beta \mid \sigma_\rho)\} \, m(\rho)
      \ d\rho \ d\beta \ d\phi \right|
    \\&\lesssim \int |q_n^\rho - q_n^{\sigma_\rho}| \, m(\rho) \ dh \ d\rho.
  \end{align*}
  By the argument given under the heading \textbf{First Term} in the proof of
  Lemma~\ref{lem:bvm}, we know that $\limsup \sup_\rho \int |q_n^\rho -
  q_n^{\sigma_\rho}| \ dh \lesssim \delta$ almost surely so that
  \begin{align*}
    \limsup \left| \int \psi(\phi) g(\phi \mid \beta) \{\pit(\beta \mid \rho) - \pit(\beta \mid \sigma_\rho)\} \, m(\rho) \ d\rho \ d\beta \ d\phi \right|
    \lesssim \delta.
  \end{align*}
  For the second term, let $m\{(a, b)\}$ denote the mass $m(\phi)$ assigned to
  the interval $(a,b)$ and observe that
  \begin{align*}
    &\int \psi(\phi) g(\phi \mid \beta) \pit(\beta \mid \sigma_\rho)
      \, m(\rho) \ d\rho \ d\beta \ d\phi
    \\&= \sum_k m\{(\lambda_k, \lambda_{k+1})\} \int \psi(\phi)
    \, g(\phi \mid \beta) \pit(\beta \mid \sigma_k) \ d\beta \ d\phi.
    \\&= \sum_k m\{(\lambda_k, \lambda_{k+1})\} \int \psi(\phi)
    \, m_{\sigma_k}(\phi) \ d\phi
    \\&\to \sum_k m\{(\lambda_k, \lambda_{k+1})\} \psi(\phi_0) = \psi(\phi_0),
  \end{align*}
  with the last line following from the fact that $m_{\sigma_k}(\phi) \to
  \delta_{\phi_0}(\phi)$ in distribution (almost surely).

  Summarizing, we have
  \begin{align*}
    &\limsup \left| \int \psi(\phi) \, m(\phi) - \psi(\phi_0) \right|
    \\&\qquad\le
    \limsup \left|\int \psi(\phi) g(\phi \mid \beta)
    \{\pit(\beta \mid \rho) - \pit(\beta \mid \sigma_\rho)\} \, m(\rho) \
    d\rho \ d\beta \ d\phi  \right|
    \\&\qquad \qquad + \left|\int \psi(\phi) g(\phi \mid \beta)
    \, \pit(\beta \mid \sigma_\rho) \, m(\rho) \ d\rho \ d\beta \ d\phi
      - \psi(\phi_0) \right|
    \\&\qquad\lesssim \delta + 0,
  \end{align*}
  for all $\delta > 0$, so that the limit exists and is $0$.
\end{proof}

\end{document}